\newtheorem{theorem}{Theorem}
\newtheorem{lemma}[theorem]{Lemma}
\newtheorem{definition}[theorem]{Definition}
\newtheorem{corollary}[theorem]{Corollary}
\begin{document}

\title{On Finding Quantum Multi-collisions}
\author{Qipeng Liu and Mark Zhandry\\ Princeton University}

%\author{Qipeng Liu\inst{1} and Mark Zhandry\inst{1}}
%\institute{Princeton University, Princeton NJ 08544, USA}

\maketitle

\begin{abstract}
    A $k$-collision for a compressing hash function $H$ is a set of $k$ distinct inputs that all map to the same output.  In this work, we show that for any constant $k$,  $\Theta\left(N^{\frac{1}{2}(1-\frac{1}{2^k-1})}\right)$ quantum queries are both necessary and sufficient to achieve a $k$-collision with constant probability.  This improves on both the best prior upper bound (Hosoyamada et al., ASIACRYPT 2017) and provides the first non-trivial lower bound, completely resolving the problem.
\end{abstract}

\section{Introduction}

Collision resistance is one of the central concepts in cryptography.  A \emph{collision} for a hash function $H:\{0,1\}^m\rightarrow\{0,1\}^n$ is a pair of distinct inputs $x_1\neq x_2$ that map to the same output: $H(x_1)=H(x_2)$.  

\paragraph{Multi-collisions.} Though receiving comparatively less attention in the literature, multi-collision resistance is nonetheless an important problem.  A $k$-collision for $H$ is a set of $k$ distinct inputs $\{x_1,\dots,x_k\}$ such that $x_i\neq x_j$ for $i\neq j$ where $H(x_i)=H(x_j)$ for all $i,j$.  

Multi-collisions frequently surface in the analysis of hash functions and other primitives.  Examples include MicroMint~\cite{SPW:RivSha97}, RMAC~\cite{FSE:JauJouVal02}, chopMD~\cite{FSE:ChaNan08b}, Leamnta-LW~\cite{ICISC:HIKOPY10}, PHOTON and Parazoa~\cite{SCN:NaiOht14}, the Keyed-Sponge~\cite{AC:JovLuyMen14}, all of which assume the multi-collision resistance of a certain function.  Multi-collisions algorithms have also been used in attacks, such as the MDC-2~\cite{EC:KMRT09},  HMAC~\cite{IWSEC:NSWY13}, Even-Mansour~\cite{AC:DDKS14}, and LED~\cite{FSE:NikWanWu13}.  Multi-collision resistance for polynomial $k$ has also recently emerged as a theoretical way to avoid keyed hash functions~\cite{STOC:BitKalPan18,EC:BDRV18}, or as a  useful cryptographic primitives, for example, to build statistically hiding commitment schemes with succinct interaction\cite{EC:KomNaoYog18}.

\paragraph{Quantum.} Quantum computing stands to fundamentally change the field of cryptography.  Importantly for our work, Grover's algorithm~\cite{STOC:Grover96} can speed up brute force searching by a quadratic factor, greatly increasing the speed of pre-image attacks on hash functions.  In turn, Grover's algorithm can be used to find ordinary collisions ($k=2$) in time $O(2^{n/3})$, speeding up the classical ``birthday'' attack which requires $O(2^{n/2})$ time.  It is also known that, in some sense (discussed below), these speedups are optimal~\cite{AS2004,Zhandry15}.  These attacks require updated symmetric primitives with longer keys in order to make such attacks intractable.

\subsection{This Work: Quantum Query Complexity of Multi-collision Resistance}

In this work, we consider \emph{quantum} multi-collision resistance.  
Unfortunately, little is known of the difficulty of finding \emph{multi}-collisions for $k\geq 3$ in the quantum setting.  The only prior work on this topic is that of Hosoyamada et al.~\cite{AC:HosSasXag17}, who give a $O(2^{4n/9})$ algorithm for 3-collisions, as well as algorithms for general constant $k$.  On the lower bounds side, the $\Omega(2^{n/3})$ from the $k=2$ case applies as well for higher $k$, and this is all that is known.

We completely resolve this question, giving tight upper and lower bounds for any constant $k$.  In particular, we consider the quantum query complexity of multi-collisions.  We will model the hash function $H$ as a \emph{random oracle}.  This means, rather than getting concrete code for a hash function $H$, the adversary is given black box access to a function $H$ chosen uniformly at random from the set of all functions from $\{0,1\}^m$ into $\{0,1\}^n$.  Since we are in the quantum setting, black box access means the adversary can make quantum queries to $H$.  Each query will cost the adversary 1 time step.  The adversary's goal is to solve some problem --- in our case find a $k$-collision --- with the minimal cost.  Our results are summarized in Table~\ref{table1}.  Both our upper bounds and lower bounds improve upon the prior work for $k\geq 3$; for example, for $k=3$, we show that the quantum query complexity is $\Theta(2^{3n/7})$.

\begin{table}
	\centering
\begin{tabular}[h]{|c|c|c|}\hline
			& Upper Bound (Algorithm)			& Lower Bound 						\\\hline
\cite{BHT98}            & $O(2^{m/3})$ for $k=2$ (2-to-1)		&             						\\\hline
\cite{AS2004}    &                               		& $\Omega(2^{m/3})$ for $k=2$ (2-to-1)			\\\hline
\cite{Zhandry15}	& $O(2^{n/3})$ for $k=2$ (Random, $m\geq n/2)$	& $\Omega(2^{n/3})$ for $k=2$ (Random) 			\\\hline
\cite{AC:HosSasXag17}	& $O\left(2^{\frac{1}{2}(1-\frac{1}{3^{k-1}})n}\right)$ ($m\geq n+\log k$)	& 							\\\hline
{\bf This Work}		& $\mathbf{O\left(2^{\frac{1}{2}(1-\frac{1}{2^k-1})n}\right)}$ {\bf ($\mathbf{m\geq n+\log k}$)}	& $\mathbf{\Omega\left(2^{\frac{1}{2}(1-\frac{1}{2^k-1})n}\right)}$ {\bf (Random)} 	\\\hline
\end{tabular}

	\caption{\label{table1} Quantum query complexity results for $k$-collisions.  $k$ is taken to be a constant, and all Big $O$ and $\Omega$ notations hide constants that depend on $k$.  In parenthesis are the main restrictions for the lower bounds provided.  We note that in the case of 2-to-1 functions, $m\leq n+1$, so implicitly these bounds only apply in this regime.  In these cases, $m$ characterizes the query complexity.  On the other hand, for random or arbitrary functions, $n$ is the more appropriate way to measure query complexity.  We also note that for arbitrary functions, when $m\leq n+\log (k-1)$, it is possible that $H$ contains no $k$-collisions, so the problem becomes impossible.  Hence, $m\geq n+\log k$ is essentially tight.  For random functions, there will be no collisions w.h.p unless $m\gtrsim (1-\frac{1}{k})n$, so algorithms on random functions must always operate in this regime.}
\end{table}

\subsection{Motivation}

Typically, the parameters of a hash function are set to make finding collisions intractable.  One particularly important parameter is the output length of the hash function, since the output length in turn affects storage requirements and the efficiency of other parts of a cryptographic protocol.  

Certain attacks, called \emph{generic} attacks, apply regardless of the implementation details of the hash function $H$, and simply work by evaluating $H$ on several inputs.  For example, the birthday attack shows that it is possible to find a collision in time approximately $2^{n/2}$ by a classical computer.  Generalizations show that $k$-collisions can be found in time $\Theta(2^{(1-1/k)n})$\enspace\footnote{Here, the Big Theta notation hides a constant that depends on $k$}.  

These are also known to be optimal among \emph{classical} generic attacks.  This is demonstrated by modeling $H$ as an oracle, and counting the number of queries needed to find ($k$-)collisions in an arbitrary hash function $H$.  In cryptographic settings, it is common to model $H$ as a \emph{random} function, giving stronger \emph{average case} lower bounds.

Understanding the effect of generic attacks is critical.  First, they cannot be avoided, since they apply no matter how $H$ is designed.  Second, other parameters of the function, such as the number of iterations of an internal round function, can often be tuned so that the best known attacks are in fact generic.  Therefore, for many hash functions, the complexity of generic attacks accurately represents the actual cost of breaking them.

Therefore, for ``good'' hash functions where generic attacks are optimal, in order to achieve security against classical adversaries $n$ must be chosen so that $t=2^{n/2}$ time steps are intractable.  This often means setting $t= 2^{128}$, so $n=256$.  In contrast, generic classical attacks can find $k$-collisions in time $\Theta(2^{(1-1/k)n})$.  For example, this means that $n$ must be set to $192$ to avoid $3$-collisions, or $171$ to avoid $4$-collisions.

Once quantum computers enter the picture, we need to consider quantum queries to $H$ in order to model actual attacks that evaluate $H$ in superposition.  This changes the query complexity, and makes proving bounds much more difficult.  Just as understanding query complexity in the classical setting was crucial to guide parameter choices, it will be critical in the quantum world as well.  

\medskip

We also believe that quantum query complexity is an important study in its own right, as it helps illuminate the effects quantum computing will have on various areas of computer science.  It is especially important to cryptography, as many of the questions have direct implications to the post-quantum security of cryptosystems.  Even more, the techniques involved are often closely related to proof techniques in post-quantum cryptography.  For example, bounds for the quantum query complexity of finding collisions in random functions~\cite{Zhandry15}, as well as more general functions~\cite{EPRINT:EbrUnr17,EPRINT:BalEatSon17}, were developed from techniques for proving security in the quantum random oracle model~\cite{AC:BDFLSZ11,C:Zhandry12,TCC:TarUnr16}.  Similarly, the lower bounds in this work build on techniques for proving quantum indifferentiability~\cite{EPRINT:Zhandry18}.  On the other hand, proving the security of MACs against superposition queries~\cite{EC:BonZha13} resulted in new lower bounds for the quantum oracle interrogation problem~\cite{FOCS:VanDam98} and generalizations~\cite{zhandry2015quantum}.

Lastly, multi-collision finding can be seen as a variant of $k$-distinctness, which is essentially the problem of finding a $k$-collision in a function $H:\{0,1\}^n\rightarrow\{0,1\}^n$, where the $k$-collision may be unique and all other points are distinct.  The quantum query complexity of $k$-distinctness is currently one of the main open problems in quantum query complexity.  An upper bound of $(2^n)^{\frac{3}{4}-\frac{1}{4(2^k-1)}}$ was shown by Belovs~\cite{FOCS:Belovs12}.  The best known lower bound is $\Omega((2^n)^{\frac{3}{4}-\frac{1}{2k}})$~\cite{STOC:BunKotTha18}.  Interestingly, the dependence of the exponent on $k$ is exponential for the upper bound, but polynomial for the lower bound, suggesting a fundamental gap our understanding of the problem.

Note that our results do not immediately apply in this setting, as our algorithm operates only in a regime where there are many ($\leq k$-)collisions, whereas $k$-distinctness applies even if the $k$-collision is unique and all other points are distinct (in particular, no $(k-1)$-collisions).  On the other hand, our lower bound is always lower than $2^{n/2}$, which is trivial for this problem.  Nonetheless, both problems are searching for the same thing --- namely a $k$-collisions --- just in different settings.  We hope that future work may be able to extend our techniques to solve the problem of $k$-distinctness.

\subsection{The “Reciprocal Plus 1” Rule}  For many search problems over random functions, such as pre-image search, collision finding, $k$-sum, quantum oracle interrogation, and more, a very simple folklore rule of thumb translates the classical query complexity into quantum query complexity.  

In particular, let $N = 2^n$,  all of these problems have a classical query complexity $\Theta(N^{1/\alpha})$ for some rational number $\alpha$.  Curiously, the quantum query complexity of all these problems is always $\Theta(N^{\frac{1}{\alpha+1}})$.  

In slightly more detail, for all of these problems the best classical $q$-query algorithm solves the problem with probability $\Theta(q^c/N^d)$ for some constants $c,d$.  Then the classical query complexity is $\Theta(N^{d/c})$.  For this class of problems, the success probability of the best $q$ query quantum algorithm is obtained simply by increasing the power of $q$ by $d$.  This results in a quantum query complexity of $\Theta(N^{d/(c+d)})$.  Examples:
\begin{itemize}
	\item Grover's pre-image search~\cite{STOC:Grover96} improves success probability from $q/N$ to $q^2/N$, which is known to be optimal~\cite{BBBV97}.  The result is a query complexity improvement from $N=N^{1/1}$ to $N^{1/2}$.
	
	Similarly, finding, say, 2 pre-images has classical success probability $q^2/N^2$; it is straightforward to adapt known techniques to prove that the best quantum success probability is $q^4/N^2$.  Again, the query complexity goes from $N$ to $N^{1/2}$.  Analogous statements hold for any \emph{constant} number of pre-images.
	\item The BHT collision finding algorithm~\cite{BHT98} finds a collision with probability $q^3/N$, improving on the classical birthday attack $q^2/N$.  Both of these are known to be optimal~\cite{AS2004,Zhandry15}.  Thus quantum algorithms improve the query complexity from $N^{1/2}$ to $N^{1/3}$.
	
	Similarly, finding, say, 2 distinct collisions has classical success probability $q^4/N^2$, whereas we show that the quantum success probability is $q^6/N^2$.  More generally, any \emph{constant} number of distinct collisions conforms to the Reciprocal Plus 1 Rule.  
	
	\item $k$-sum asks to find a set of $k$ inputs such that the sum of the outputs is 0.  This is a different generalization of collision finding than what we study in this work.  Classically, the best algorithm succeeds with probability $q^k/N$.  Quantumly, the best algorithm succeeds with probability $q^{k+1}/N$~\cite{ITCS:BelSpa13,EPRINT:Zhandry18}.  Hence the query complexity goes from $N^{1/k}$ to $N^{1/(k+1)}$.
	
	Again, solving for any \emph{constant} number of distinct $k$-sum solutions also conforms to the Reciprocal Plus 1 Rule.
	\item In the oracle interrogation problem, the goal is to compute $q+1$ input/output pairs, using only $q$ queries.  Classically, the best success probability is clearly $1/N$.  Meanwhile, Boneh and Zhandry~\cite{EC:BonZha13} give a quantum algorithm with success probability roughly $q/N$, which is optimal.
\end{itemize}

Some readers may have noticed that Reciprocal Plus 1 (RP1) rule does not immediately appear to apply the Element Distinctness.  The Element Distinctness problem asks to find a collision in $H:[M]\rightarrow[N]$ where the collision is unique.  Classically, the best algorithm succeeds with probability $\Theta(q^2/M^2)$.  On the other hand, quantum algorithms can succeed with probability $\Theta(q^3/M^2)$, which is optimal~\cite{FOCS:Ambainis04,Zhandry15}.  This does not seem to follow the prediction of the RP1 rule, which would have predicted $q^4/M^2$.  However, we note that unlike the settings above which make sense when $N\ll M$, and where the complexity is characterized by $N$, the Element Distinctness problem requires $M\leq N$ and the complexity is really characterized by the domain size $M$.  Interestingly, we note that for a random expanding function, when $N\approx M^2$, there will with constant probability be exactly one collision in $H$.  Thus, in this regime the collision problem matches the Element Distinctness problem, and the RP1 rule gives the right query complexity!

Similarly, the quantum complexity for $k$-sum is usually written as $M^{k/(k+1)}$, not $N^{1/(k+1)}$.  But again, this is because most of the literature considers $H$ for which there is a unique $k$-sum and $H$ is non-compressing, in which case the complexity is better measured in terms of $M$.  Notice that a random function will contain a unique $k$ collision when $N\approx M^k$, in which case the bound we state (which follows the RP1 rule) exactly matches the statement usually given.  %Moreover, note that the lower bound of \cite{ITCS:BelSpa13} --- which applies in the unique $k$-sum case --- only applies if $N\geq M^k$; meanwhile Zhandry's bound~\cite{EPRINT:Zhandry18} applies only for random functions.

On the other hand, the RP1 rule does not give the right answer for $k$-distinctness for $k\geq 3$, since the RP1 rule would predict the exponent to approach $1/2$ for large $k$, whereas prior work shows that it approaches $3/4$ for large $k$.  That RP1 does not apply perhaps makes sense, since there is no setting of $N,M$ where a random function will become an instance of $k$-distinctness: for any setting of parameters where a random function has a $k$-collision, it will also most likely have many $(k-1)$-collisions.   

The takeaway is that the RP1 Rule seems to apply for natural search problems that make sense on random functions when $N\ll M$.  Even for problems that do not immediately fit this setting such as Element Distinctness, the rule often still gives the right query complexity by choosing $M,N$ so that a random function is likely to give an instance of the desired problem.

\paragraph{Enter $k$-collisions.}  In the case of $k$-collisions, the classical best success probability is $q^k/N^{(k-1)}$, giving a query complexity of $N^{(k-1)/k}=N^{1-1/k}$.  Since the $k$-collision problem is a generalization of collision finding, is similar in spirit to the problems above, and applies to compressing random functions, one may expect that the Reciprocal Plus 1 Rule applies.  If true, this would give a quantum success probability of $q^{2k-1}/N^{k-1}$, and a query complexity of $N^{(k-1)/(2k-1)}=N^{\frac{1}{2}(1-\frac{1}{2k-1})}$.

Even more, for small enough $q$, it is straightforward to find a $k$-collision with probability $O(q^{2k-1}/N^{k-1})$ as desired.  In particular, divide the $q$ queries into $k-1$ blocks.  Using the first $q/(k-1)$ queries, find a 2-collision with probability $(q/(k-1))^3/N=O(q^3/N)$.  Let $y$ be the image of the collision.  Then, for each of the remaining $(k-2)$ blocks of queries, find a pre-image of $y$ with probability $(q/(k-1))^2/N=O(q^2/N)$ using Grover search.  The result is $k$ colliding inputs with probability $O(q^{3+2(k-2)}/N^{k-1})=O(q^{2k-1}/N^{k-1})$.  It is also possible to prove that this is a lower bound on the success probability (see lower bound discussion below).  Now, this algorithm works as long $q\leq N^{1/3}$, since beyond this range the 2-collision success probability is bounded by $1<q^3/N$.  Nonetheless, it is asymptotically tight in the regime for which it applies.  This seems to suggest that the limitation to small $q$ might be an artifact of the algorithm, and that a more clever algorithm could operate beyond the $N^{1/3}$ barrier.  In particular, this strongly suggests $k$-collisions conforms to the Reciprocal Plus 1 Rule.

Note that the RP1 prediction gives an exponent that depends polynomially on $k$, asymptotically approaching $1/2$.  In contrast, the prior work of~\cite{AC:HosSasXag17} approaches $1/2$ \emph{exponentially fast} in $k$.  Thus, prior to our work we see an exponential vs polynomial gap for $k$-collisions, similar to the case of $k$-distinctness.  

\medskip

Perhaps surprisingly given the above discussion\footnote{At least, the authors found it surprising!}, our work demonstrates that the right answer is in fact exponential, refuting the RP1 rule for $k$-collisions.  

As mentioned above, our results do not immediately give any indication for the query complexity of $k$-distinctness.  However, our results may hint that $k$-distinctness also exhibits an exponential dependence on $k$.  We hope that future work, perhaps building on our techniques, will be able to resolve this question.

\subsection{Technical Details}

\subsubsection{The Algorithm}  At their heart, the algorithms for pre-image search, collision finding, $k$-sum, and the recent algorithm for $k$-collision, all rely on Grover's algorithm.  Let $f:\{0,1\}^n\rightarrow\{0,1\}$ be a function with a fraction $\delta$ of accepting inputs.  Grover's algorithm finds the input with probability $O(\delta q^2)$ using $q$ quantum queries to $f$.  Grover's algorithm finds a pre-image of a point $y$ in $H$ by setting $f(x)$ to be 1 if and only if $H(x)=y$.

The BHT algorithm~\cite{BHT98} uses Grover's to find a collision in $H$.  First, it queries $H$ on $q/2=O(q)$ random points, assembling a database $D$.  As long as $q\ll N^{1/2}$, all the images in $D$ will be distinct.  Now, it lets $f(x)$ be the function that equals 1 if and only if $H(x)$ is found amongst the images in $D$, and $x$ is not among the pre-images.  By finding an accepting input to $f$, one immediately finds a collision.  Notice that the fraction of accepting inputs is approximately $q/N$.  

By running Grover's for $q/2=O(q)$ steps, one obtains a such a pre-image, and hence a collision, with probability $O((q/N) q^2)=O(q^3/N)$.

Hosoyamada et al. show how this idea can be recursively applied to find multi-collisions.  For $k=3$, the first step is to find a database $D_2$ consisting of $r$ distinct 2-collisions.  By recursively applying the BHT algorithm, each 2-collision takes time $N^{1/3}$.  Then, to find a 3 collision, set up $f$ as before: $f(x)=1$ if and only if $H(x)$ is amongst the images in $D$ and $x$ is not among the pre-images.  The fraction of accepting inputs is approximately $r/N$, so Grover's algorithm will find a 3-collision in time $(N/r)^{1/2}$.  Setting $r$ to be $N^{1/9}$ optimizes the total query count as $N^{4/9}$.  For $k=4$, recursively build a table $D_3$ of 3-collisions, and set up $f$ to find a collision with the database.

\medskip

\noindent The result is an algorithm for  $k$-collisions for any constant $k$, using $O(N^{\frac{1}{2}(1-\frac{1}{3^{k-1}})})$ queries.  

\medskip

Our algorithm improves on Hosoyamada et al.'s, yielding a query complexity of $O(N^{\frac{1}{2}(1-\frac{1}{2^{k}-1})})$.  Note that for Hosoyamada et al.'s algorithm, when constructing $D_{k-1}$, many different $D_{k-2}$ databases are being constructed, one for each entry in $D_{k-1}$.  Our key observation is that a single database can be re-used for the different entries of $D_{k-1}$.  This allows us to save on some of the queries being made.  These extra queries can then be used in other parts of the algorithm to speed up the computation.  By balancing the effort correctly, we obtain our algorithm.  Put another way, the cost of finding many ($k$-)collisions can be amortized over many instances, and then recursively used for finding collisions with higher $k$.  Since the recursive steps involve solving many instances, this leads to an improved computational cost.

In more detail, we iteratively construct databases $D_1,D_2,\dots,D_{k}$.  Each $D_i$ will have $r_i$ $i$-collisions.  We set $r_k=1$, indicating that we only need a single $k$-collision.  To construct database $D_1$, simply query on $r_1$ arbitrary points.  To construct database $D_i,i\geq 2$, define the function $f_i$ that accepts inputs that collide with $D_{i-1}$ but are not contained in $D_{i-1}$.  The fraction of points accepted by $f_i$ is approximately $r_{i-1}/N$.  Therefore, Grover's algorithm returns an accepting input in time $(N/r_{i-1})^{1/2}$.  We simply run Grover's algorithm $r_i$ times \emph{using the same database} $D_{i-1}$ to construct $D_i$ in time $r_i(N/r_{i-1})^{1/2}$.

Now we just optimize $r_1,\dots,r_{k-1}$ by setting the number of queries to construct each database to be identical.  Notice that $r_1=O(q)$, so solving for $r_i$ gives us 
\[r_k=O\left(\frac{q^{2-\frac{1}{2^{k-1}}}}{N^{1-\frac{1}{2^{k-1}}}} \right)\]

Setting $r_k=1$ and solving for $q$ gives the desired result.  In particular, in the case $k=3$, our algorithm finds a collision in time $O(N^{3/7})$.

\subsubsection{The Lower Bound.} Notice that our algorithm fails to match the result one would get by applying the ``Reciprocal Plus 1 Rule''.  Given the discussion above, one may expect that our iterative algorithm could potentially be improved on even more.  To the contrary we prove that, in fact, our algorithm is asymptotically optimal for any constant $k$.

Toward that end, we employ a recent technique developed by Zhandry~\cite{EPRINT:Zhandry18} for analyzing quantum queries to \emph{random} functions.  We use this technique to show that our algorithm is tight for random functions, giving an average-case lower bound.

\paragraph{Zhandry's ``Compressed Oracles.''}  Zhandry demonstrates that the information an adversary knows about a random oracle $H$ can be summarized by a database $D^*$ of input/output pairs, which is updated according to special rules.  In Zhandry's terminology, $D^*$ is the ``compressed standard/phase oracle''.  

This $D^*$ is not a classical database, but technically a superposition of databases, meaning certain amplitudes are assigned to each possible database.  $D^*$ can be measured, obtaining an actual classical database $D$ with probability equal to its amplitude squared.  In the following discussion, we will sometimes pretend that $D^*$ is actually a classical database.  While inaccurate, this will give the intuition for the lower bound techniques we employ.  In the section \ref{lowerboundsection} we take care to correctly analyze $D^*$ as a superposition of databases.  
\medskip

\noindent Zhandry shows roughly the following:
\begin{itemize}
	\item Consider any ``pre-image problem'', whose goal is to find a set of pre-images such that the images satisfy some property.  For example, $k$-collision is the problem of finding $k$ pre-images such that the corresponding images are all the same.  
	
	Then after $q$ queries, consider measuring $D^*$.  The adversary can only solve the pre-image problem after $q$ queries if the measured $D^*$ has a solution to the pre-image problem.

Thus, we can always upper bound the adversary's success probability by upper bounding the probability $D^*$ contains a solution.

	\item $D^*$ starts off empty, and each query can only add one point to the database.  

	\item For any image point $y$, consider the  amplitude on databases containing $y$ as a function of $q$ (remember that amplitude is the square root of the probability).  Zhandry shows that this amplitude can only increase by $O(\sqrt{1/N})$ from one query to the next.  More generally, for a set $S$ of $r$ different images, the amplitude on databases containing any point in $S$ can only increase by $O(\sqrt{|S|/N})$.  
\end{itemize}

The two results above immediately imply the optimality of Grover's search.  In particular, the amplitude on databases containing $y$ is at most $O(q \sqrt{1/N})$ after $q$ queries, so the probability of obtaining a solution is the square of this amplitude, or $O(q^2/N)$.  This also readily gives a lower bound for the collision problem.  Namely, in order to introduce a collision to $D^*$, the adversary must add a point that collides with one of the existing points in $D^*$.  Since there are at most $q$ such points, the amplitude on such $D^*$ can only increase by $O(\sqrt{q/N})$.  This means the overall amplitude after $q$ queries is at most $O(q^{3/2}/N^{1/2})$.  Squaring to get a probability gives the correct lower bound.

\paragraph{A First Attempt.} Our core idea is to attempt a lower bound for $k$-collision by applying these ideas recursively.  The idea is that, in order to add, say, a 3-collision to $D^*$, there must be an existing 2-collision in the database.  We can then use the 2-collision lower bound to bound the increase in amplitude that results from each query.

More precisely, for very small $q$, we can bound the amplitude on databases containing $\ell$ distinct 2-collisions as $O(\;(q^{3/2}/N^{1/2})^\ell)$.  If $q\ll N^{1/3}$, $\ell$ must be a constant else this term is negligible.  So we can assume for $q<N^{1/3}$ that $\ell$ is a constant.

Then, we note that in order to introduce a 3-collision, the adversary's new point must collide with one of the existing 2-collisions.  Since there are at most $\ell$, we know that the amplitude increases by at most $O(\sqrt{\ell}/N^{1/2})=O(1/N^{1/2})$ since $\ell$ is a constant.  This shows that the amplitude on databases with 3-collisions is at most $q/N^{1/2}$.  

We can bound the amplitude increase even smaller by using not only the fact that the database contains at most $\ell$ 2-collisions, but the fact that the amplitude on databases containing even a single 2-collision is much less than 1.  In particular, it is $O(q^{3/2}/N^{1/2})$ as demonstrated above.  Intuitively, it turns out we can actually just multiply the $1/N^{1/2}$ amplitude increase in the case where the database contains a 2-collision by the $q^{3/2}/N^{1/2}$ amplitude on databases containing any 2-collision to get an overall amplitude increase of $q^{3/2}/N$.  

Overall then, we upper bound the amplitude after $q<N^{1/3}$ queries by $O(q^{5/2}/N)$, given an upper bound of $O(q^5/N^2)$ on the probability of finding a 3-collision.  This lower bound can be extended recursively to any constant $k$-collisions, resulting in a bound that exactly matches the Reciprocal Plus 1 Rule, as well as the algorithm for small $q$!  This again seems to suggest that our algorithm is not optimal.

\paragraph{Our Full Proof.} There are two problems with the argument above that, when resolved, actually do show our algorithm is optimal.  First, when $q\geq N^{1/3}$, the $O(q^{3/2}/N^{1/2})$ part of the amplitude bound becomes vacuous, as amplitudes can never be more than 1.  Second, the argument fails to consider algorithms that find many 2-collisions, which is possible when $q>N^{1/3}$.  Finding many 2-collisions of course takes more queries, but then it makes extending to 3-collisions easier, as there are more collisions in the database to match in each iteration.  

In our full proof, we examine the amplitude on the databases containing a 3-collision as well as $r$ 2-collisions, after $q$ queries.  We call this amplitude $g_{q,r}$.  We show a careful recursive formula for bounding $g$ using Zhandry's techniques, which we then solve.

More generally, for any constant $k$, we let $g^{(k)}_{q,r,s}$ be the amplitude on databases containing exactly $r$ distinct $(k-1)$-collisions and at least $s$ distinct $k$-collisions after $q$ queries.  We develop a multiply-recursive formula for the $g^{(k)}$ in terms of the $g^{(k)}$ and $g^{(k-1)}$.  We then recursively plug in our solution to $g^{(k-1)}$ so that the recursion is just in terms of $g^{(k)}$, which we then solve using delicate arguments. 

Interestingly, this recursive structure for our lower bound actually closely matches our algorithm.  Namely, our proof lower bounds the difficulty of adding an $i$-collision to a database $D^*$ containing many $i-1$ collisions, exactly the problem our algorithm needs to solve. Our techniques essentially show that every step of our algorithm is tight, resulting in a lower bound of $\mathbf{\Omega\left(N^{\frac{1}{2}(1-\frac{1}{2^k-1})}\right)}$, exactly matching our algorithm.  Thus, we solve the quantum query complexity of $k$-collisions.

\subsection{Other Related Work}
Most of the related work has been mentioned earlier. Recently, in 
\cite{cryptoeprint:2018:1122}, Hosoyamada, Sasaki, Tani and Xagawa gave the same improvement. And they also showed that, their algorithm can also find a multi-collision for a more general setting where $|X| \geq \frac{l}{c_N} \cdot |Y|$ for any positive value $c_N \geq 1$ which is in $o(N^{\frac{1}{2^l - 1}})$ and  find a multiclaw for random functions with the same query complexity. They also noted that our improved collision finding algorithm for the case $|X| \geq l \cdot |Y|$ was reported in the Rump Session of AsiaCrypt 2017. They did not give an accompanying lower bound.

\section*{Acknowledgements}

This work is supported in part by NSF. Opinions, findings and conclusions or recommendations expressed in this material are those of the author(s) and do not necessarily reflect the views of NSF.

\section{Preliminaries}

Here, we recall some basic facts about quantum computation, and review the relevant literature on quantum search problems.

\subsection{Quantum Computation}
A quantum system $Q$ is defined over a finite set $B$ of classical states. In this work we will consider $B = \{0,1\}^n$. A \textbf{pure state} over $Q$ is a unit vector in $\mathbb{C}^{|B|}$, which assigns a complex number to each element in $B$. In other words, let $|\phi\rangle$ be a pure state in  $Q$, we can write $|\phi\rangle$ as:
\begin{equation*}
    |\phi\rangle = \sum_{x \in B} \alpha_x |x\rangle
\end{equation*}
where $\sum_{x \in B} |\alpha_x|^2 = 1$ and $\{|x\rangle\}_{x \in B}$ is called 
the ``\textbf{computational basis}'' of $\mathbb{C}^{|B|}$. The computational basis forms an orthonormal basis of $\mathbb{C}^{|B|}$. 

Given two quantum systems $Q_1$ over $B_1$ and $Q_2$ over $B_2$, we can define a \textbf{product} quantum system $Q_1 \otimes Q_2$ over the set $B_1 \times B_2$. Given $|\phi_1\rangle \in Q_1$ and $|\phi_2\rangle \in Q_2$, we can define the product state $|\phi_1\rangle \otimes |\phi_2\rangle \in Q_1 \otimes Q_2$. 

We say $|\phi\rangle \in Q_1 \otimes Q_2$ is \textbf{entangled} if there does not exist 
$|\phi_1\rangle \in Q_1$ and $|\phi_2\rangle \in Q_2$ such that $|\phi\rangle = |\phi_1\rangle \otimes |\phi_2\rangle$. For example, consider $B_1 = B_2 = \{0,1\}$
and $Q_1 = Q_2 = \mathbb{C}^2$, $|\phi\rangle = \frac{|00\rangle + |11\rangle}{\sqrt{2}}$ is entangled. Otherwise, we say $|\phi\rangle$ is un-entangled. 

A pure state $|\phi\rangle \in Q$ can be manipulated by a unitary transformation $U$. The resulting state $|\phi'\rangle = U |\phi\rangle$. 

We can extract information from a state $|\phi\rangle$ by performing a \textbf{measurement}. A measurement specifies an orthonormal basis, typically the computational basis, and the probability of getting result $x$ is $|\langle x | \phi \rangle|^2$. After the measurement, $|\phi\rangle$ ``collapses'' to the state $|x\rangle$ if the result is $x$. 
                
                For example, given the pure state $|\phi\rangle = \frac{3}{5} |0\rangle + \frac{4}{5} |1\rangle$ measured under $\{|0\rangle ,|1\rangle \}$, with probability $9/25$ the result is $0$ and $|\phi\rangle$ collapses to $|0\rangle$; with probability $16/25$ the result is $1$ and $|\phi\rangle$ collapses to $|1\rangle$.

    We finally assume a quantum computer can  implement any unitary transformation (by using these basic gates, Hadamard, phase, CNOT and $\frac{\pi}{8}$ gates), especially the following two unitary transformations:
        \begin{itemize}
            \item \textbf{Classical Computation:} Given a function $f : X \to Y$, one can implement a unitary $U_f$ over $\mathbb{C}^{|X|\cdot |Y|} \to \mathbb{C}^{|X| \cdot |Y|}$ such that for any $|\phi\rangle = \sum_{x \in X, y \in Y} \alpha_{x, y} |x, y\rangle$, 
            \begin{equation*}
                U_f |\phi\rangle = \sum_{x \in X, y \in Y} \alpha_{x, y} |x, y \oplus f(x)\rangle
            \end{equation*}
            
            Here, $\oplus$ is a commutative group operation defined over $Y$.
            
            \item \textbf{Quantum Fourier Transform:} Let $N = 2^n$. Given a quantum state $|\phi\rangle = \sum_{i=0}^{2^n-1} x_i |i\rangle$, by applying only $O(n^2)$ basic gates, 
	one can compute $|\psi\rangle =  \sum_{i=0}^{2^n-1} y_i |i\rangle$ where the sequence $\{y_i\}_{i=0}^{2^n-1}$ is the sequence achieved by applying the 
	classical Fourier transform ${\sf QFT}_N$ to the sequence $\{x_i\}_{i=0}^{2^n-1}$: 
		\begin{equation*}
			y_k = \frac{1}{\sqrt{N}} \sum_{i=0}^{2^n-1} x_i \omega_n^{i k} 
		\end{equation*} 
	where $\omega_n = e^{2 \pi i / N}$, $i$ is the imaginary unit.

	One interesting property of {\sf QFT} is that by preparing $|0^n\rangle$ and 
	applying ${\sf QFT}_2$ to each qubit, 	$\left({\sf QFT}_2 |0\rangle\right)^{\otimes n} = \frac{1}{\sqrt{2^n}} \sum_{x \in \{0,1\}^n} |x\rangle$ which is a uniform superposition over all possible $x \in \{0,1\}^n$.
        \end{itemize}
    
    For convenience, we sometimes ignore the normalization of a pure state which can be calculated from the context. 

\subsection{Grover's algorithm and BHT algorithm}

\begin{definition}[Database Search Problem]
	Suppose there is a function/database encoded as $F:X \to \{0, 1\}$ and $F^{-1}(1)$  is non-empty. The problem is to find $x^* \in X$ such that $F(x^*) = 1$. 
\end{definition}
We will consider adversaries with quantum access to $F$, meaning they submit queries  as $\sum_{x \in X,y\in\{0,1\}} \alpha_{x,y} |x,y\rangle$ and receive in return $\sum_{x \in X,y\in\{0,1\}} \alpha_{x,y} |x, y\oplus F(x)\rangle$.  Grover's algorithm~\cite{STOC:Grover96} finds a pre-image using an optimal number of queries:
\begin{theorem}[\cite{STOC:Grover96,boyer1998tight}]\label{thm:grover}
	Let $F$ be a function $F: X \to \{0, 1\}$. Let $t = |F^{-1}(1)| > 0$ be the number of pre-images of $1$.  There is a quantum algorithm that finds 
	$x^* \in X$ such that $F(x^*) = 1$ with an expected number of quantum queries to F at most  $O\left(\sqrt{\frac{|X|}{t}}\right)$ even without knowing $t$ in advance. 
\end{theorem}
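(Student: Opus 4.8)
The statement is the standard guarantee for Grover search with an unknown number of marked items (the Boyer--Brassard--H{\o}yer--Tapp analysis), and I would prove it in two stages: the fixed-iteration amplitude-amplification rotation, and then an exponential-guessing wrapper that removes the need to know $t$.

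First, the rotation. Turn the given XOR oracle $U_F$ into a phase oracle $O_F\colon |x\rangle\mapsto(-1)^{F(x)}|x\rangle$ by feeding it a target qubit in the state $\tfrac{1}{\sqrt2}(|0\rangle-|1\rangle)$; this costs one query per use and $O_F=O_F^{-1}$. Write $N=|X|$, let $|\psi\rangle=\tfrac{1}{\sqrt N}\sum_{x\in X}|x\rangle$ be the uniform superposition, and let $|G\rangle,|B\rangle$ denote the normalized uniform superpositions over $F^{-1}(1)$ and $F^{-1}(0)$. Setting $\sin\theta=\sqrt{t/N}$ with $\theta\in(0,\pi/2]$, we have $|\psi\rangle=\sin\theta\,|G\rangle+\cos\theta\,|B\rangle$, and the Grover iterate $\mathcal G=(2|\psi\rangle\langle\psi|-I)\,O_F$ preserves $\mathrm{span}\{|G\rangle,|B\rangle\}$ and acts there as a rotation by $2\theta$; hence $\mathcal G^{j}|\psi\rangle=\sin((2j+1)\theta)\,|G\rangle+\cos((2j+1)\theta)\,|B\rangle$, and measuring in the computational basis yields some $x\in F^{-1}(1)$ with probability $\sin^2((2j+1)\theta)$, using $j$ queries.

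Next, the wrapper for unknown $t$. Maintain a bound $m$, initialized to $1$; in each round draw $j$ uniformly from $\{0,\dots,\lceil m\rceil-1\}$, compute $\mathcal G^{j}|\psi\rangle$, measure to get $x$, and spend one more query to test $F(x)$; if $F(x)=1$ output $x$, otherwise update $m\leftarrow\min(\lambda m,\sqrt N)$ for a fixed $\lambda\in(1,2)$ and repeat. A round with bound $m$ costs at most $\lceil m\rceil+1$ queries. The enabling computation is the averaging identity $\frac1M\sum_{j=0}^{M-1}\sin^2((2j+1)\theta)=\frac12-\frac{\sin(4M\theta)}{4M\sin(2\theta)}\ge\frac14$ whenever $M\ge 1/\sin(2\theta)$; since $\cos\theta\ge 1/\sqrt2$ for $\theta\le\pi/4$ we get $\sin(2\theta)\ge\sin\theta=\sqrt{t/N}$ in the regime $t\le N/2$, so the ``critical bound'' $m^\star:=1/\sin(2\theta)$ satisfies $m^\star\le\sqrt{N/t}$ there. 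For $t>N/2$, instead test $O(1)$ uniformly random points of $X$, which finds a marked one with constant probability in $O(1)=\Theta(\sqrt{N/t})$ queries.

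Finally, the expected-query bound. Split the run at the first round in which $m\ge m^\star$. Because $m$ grows by the factor $\lambda>1$ per round, the queries spent in all earlier rounds form a geometric series dominated by its last term, i.e.\ $O(m^\star)=O(\sqrt{N/t})$ in total. From that point on every round succeeds with probability at least $1/4$ by the averaging identity, so the expected number of further rounds is $O(1)$; summing the per-round costs (again growing by the factor $\lambda$) weighted by the probability of reaching each of them gives another $O(m^\star)=O(\sqrt{N/t})$. Adding the two contributions, and combining with the $t>N/2$ regime above, yields expected query count $O(\sqrt{N/t})$. The main obstacle is precisely this last accounting: one needs $\lambda>1$ so that the ``wasted'' early rounds sum to only a constant multiple of $m^\star$, and simultaneously a uniform per-round success lower bound (which is where $\lambda<2$ and the identification $m^\star=1/\sin(2\theta)$ enter) so that the tail past $m^\star$ does not overshoot; keeping track of both directions of the geometric series, and peeling off the $t$-close-to-$N$ case where $\sin(2\theta)$ degenerates, is the delicate part.
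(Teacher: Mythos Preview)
The paper does not prove this theorem; it is stated with citations to Grover and to Boyer--Brassard--H{\o}yer--Tapp and used as a black box. Your proposal reproduces the standard BBHT argument and is correct. One small imprecision: in the tail accounting you write that $\lambda<2$ is what is needed, but the geometric series $\sum_{k\ge 0}(3/4)^k\lambda^k\,m^\star$ converges only for $\lambda<4/3$, which is why BBHT take $\lambda=6/5$; with $\lambda\in[4/3,2)$ you would have to lean on the cap $m\le\sqrt N$ and the argument no longer gives $O(\sqrt{N/t})$ directly. Otherwise the decomposition into the rotation picture, the averaging identity $\tfrac1M\sum_{j=0}^{M-1}\sin^2((2j+1)\theta)=\tfrac12-\tfrac{\sin(4M\theta)}{4M\sin 2\theta}$, the threshold $m^\star=1/\sin(2\theta)\le\sqrt{N/t}$ for $t\le N/2$, and the separate treatment of $t>N/2$ by random sampling are all exactly as in BBHT.
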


We will normally think of the number of queries as being fixed, and consider the probability of success given the number of queries.  The algorithm from Theorem~\ref{thm:grover}, when runs for $q$ queries, can be shown to have a success probability $\min(1,O(q^2/(|X|/t)))$.  For the rest of the paper, ``Grover's algorithm'' will refer to this algorithm.

\medskip

Now let us look at another important problem: $2$-collision finding problem on $2$-to-$1$ functions. 
\begin{definition}[Collision Finding on 2-to-1 Functions]
	Assume $|X| = 2 |Y| = 2 N$. Consider a function $F : X \to Y$ such that for every $y \in Y$, $|F^{-1}(y)| = 2$. In other words, every image has exactly two 
pre-images. The problem is to find $x \ne x'$ such that $F(x) = F(x')$. 
\end{definition}

Brassard, H{\o}yer and Tapp proposed a quantum algorithm \cite{BHT98} that solved the problem using only $O(N^{1/3})$ quantum queries. The idea is the following:
\begin{itemize}
\item Prepare a list of  input and output pairs, $L = \{(x_i, y_i=F(x_i)\}_{i=1}^t$ where $x_i$ is drawn uniformly at random and $t = N^{1/3}$; 
\item If there is a $2$-collision in $L$, output that pair. Otherwise,  
\item Run Grover's algorithm on the following function $F'$:  $F'(x) = 1$ if and only if there exists   $i \in \{1, 2, \cdots, t\}$, $F(x) = y_i = F(x_i)$ and $x \ne x_i$. 
	Output the solution $x$, as well as whatever $x_i$ it collides with.
\end{itemize}
This algorithm takes $O(t + \sqrt{N / {t}})$ quantum queries and when $t = \Theta(N^{1/3})$, the algorithm finds a $2$-collision with $O(N^{1/3})$ quantum queries. 

\subsection{Multi-collision Finding and \cite{AC:HosSasXag17}}

Hosoyamada, Sasaki and Xagawa proposed an algorithm for $k$-collision finding on any function $F: X \to Y$ where $|X| \geq k |Y|$ ($k$ is a constant). 
They generalized the idea of \cite{BHT98} and gave the proof for even arbitrary functions. We now briefly talk about their idea. For simplicity in this discussion, we assume $F$ is a $k$-to-$1$ function. 

The algorithm prepares $t$ pairs of $2$-collisions $(x_1, x'_1), \cdots,$ $(x_t, x'_t)$ by running the BHT algorithm $t$ times. If two pairs of $2$-collisions collide, there is at least a $3$-collision (possibly a $4$-collision). 
Otherwise, it uses Grover's algorithm to find a $x'' \ne x_i$, $x'' \ne x'_i$ and $f(x'') = f(x_i) = f(x'_i)$. 
The number of queries is $O(t N^{1/3} + \sqrt{N / t})$. When $t = \Theta(N^{1/9})$, the query complexity is $O(N^{4/9})$. 

By induction,  finding a $(k-1)$-collision requires $O(N^{(3^{k-1}-1)/(2 \cdot 3^{k-1})})$ quantum queries. 
By preparing $t$ $(k-1)$-collisions and applying Grover's algorithm to it, it takes $O(t N^{(3^{k-1}-1)/(2 \cdot 3^{k-1})} + \sqrt{\frac{N}{t}})$ quantum queries to get one 
$k$-collision. It turns out that $t = \Theta(N^{1/3^k})$ and the complexity of finding $k$-collision is $O(N^{(3^{k}-1)/(2 \cdot 3^{k})})$. 

\subsection{Compressed Fourier Oracles and Compressed Phase Oracles}

In~\cite{EPRINT:Zhandry18}, Zhandry showed a new technique for analyzing cryptosystems in the random oracle model.  He also showed that his technique can be used to re-prove several known quantum query lower bounds.  In this work, we will extend his technique in order to prove a new optimal lower bound for multi-collisions.

The basic idea of Zhandry's technique is the following: assume $\mathcal{A}$ is making a query to a random oracle $H$ and the query is $\sum_{x, u, z} a_{x, u, z} |x, u, z\rangle$ where $x$ is the query register, $u$ is the response register and $z$ is its private register.   Instead of only considering the adversary's state $\sum_{x, u, z} a_{x, u, z} |x, u + H(x), z\rangle$ for a random oracle $H$, we can 
actually treat the whole system as 
\begin{equation*}
	\sum_{x, u, z}  \sum_H  a_{x, u, z} |x, u + H(x), z\rangle   \otimes  |H\rangle
\end{equation*}
where $|H\rangle$ is the truth table of $H$. By looking at random oracles that way, Zhandry showed that these five  random oracle models/simulators are 
equivalent: 
\begin{enumerate}
	\item \textbf{Standard Oracles}: 
		\begin{equation*}
			{\sf{StO}} \sum_{x, u, z}  a_{x, u, z} |x, u, z\rangle \otimes \sum_H |H\rangle  \Rightarrow \sum_{x, u, z} \sum_H a_{x, u, z} |x, u + H(x), z\rangle \otimes |H\rangle
		\end{equation*}
	\item \textbf{Phase Oracles}:
		\begin{equation*}
			{\sf{PhO}} \sum_{x, u, z}  a_{x, u, z} |x, u, z\rangle \otimes \sum_H |H\rangle  \Rightarrow \sum_{x, u, z}  a_{x, u, z} |x, u, z\rangle \otimes \sum_H \omega_n^{H(x) \cdot u} |H\rangle
		\end{equation*}
	    where $\omega_n = e^{2 \pi i / N}$ and ${\sf PhO} = (I \otimes {\sf QFT}^\dagger \otimes I) \cdot {\sf StO} \cdot (I \otimes {\sf QFT} \otimes I)$.  In other words, it first applies the {\sf QFT} to the $u$ register, applies the standard query, and then applies ${\sf QFT}^\dagger$ one more time.

	\item \textbf{Fourier Oracles}: We can view $\sum_H |H\rangle$ as ${\sf QFT} |0^N\rangle$. In other words, if we perform Fourier transform 
	on a function that always outputs $0$, we will get a uniform superposition over all the possible functions $\sum_H |H\rangle$. 

	Moreover, $\sum_H \omega^{H(x) \cdot u} |H\rangle$ is equivalent to ${\sf QFT} |0^N \oplus (x, u)\rangle$. Here $\oplus$ means updating (xor) the
		$x$-th entry in the database with $u$. 

	So in this model, we start with $\sum_{x, u, z} a^0_{x, u, z} |x, u, z\rangle \otimes {\sf QFT} |D_0\rangle$ where $D_0$ is an all-zero function. 
	By making the $i$-th query, we have 
		\begin{equation*}
			{\sf PhO} \sum_{x, u, z, D} a^{i-1}_{x, u, z, D} |x, u, z\rangle \otimes {\sf QFT}|D\rangle \Rightarrow \sum_{x,u, z, D} a^{i-1}_{x, u, z, D} |x, u, z\rangle \otimes {\sf QFT}|D \oplus (x, u)\rangle
		\end{equation*}

		The Fourier oracle incorporates ${\sf QFT}$ and operates directly on the $D$ registers:
		\begin{equation*}
			{\sf FourierO} \sum_{x, u, z, D} a^{i-1}_{x, u, z, D} |x, u, z\rangle \otimes |D\rangle \Rightarrow \sum_{x,u,z,D} a^{i-1}_{x, u, z, D} |x, u, z\rangle \otimes |D \oplus (x, u)\rangle
		\end{equation*}

	\item \textbf{Compressed Fourier Oracles}:  The idea is basically the same as Fourier oracles. But  when the algorithm only makes 
		$q$ queries, the database $D$ with non-zero weight contains at most $q$ non-zero entries. 

		%So to describe $D_{q, x, u}$, we only need at most $q$ different $(x_i, u_i)$ pairs which says the function outputs $u_i$ on $x_i$ and 
		%	$0$ everywhere else. And $D \oplus (x', u')$ is doing the following: 1) if $x'$ is not in the list $D$, put $(x', u')$ in $D$; 2) 
		%	if $(x', u'')$ is in the list $D$, update $u''$ to $u'' + u'$ in $D$. 

		So to describe $D$ , we only need at most $q$ different $(x_i, u_i)$ pairs ($u_i \ne 0$) which says the database outputs $u_i$ on $x_i$ and 
			$0$ everywhere else. And $D \oplus (x, u)$ is doing the following: 1) if $x$ is not in the list $D$ and $u \ne 0$, put $(x, u)$ in $D$; 2) 
			if $(x, u')$ is in the list $D$ and $u' \ne u$, update $u'$ to $u' \oplus u$ in $D$; 3) if $(x, u')$ is in the   list and 
			$u' = u$, remove $(x, u')$ from $D$.  

		In the model, we start with $\sum_{x, u, z} a^0_{x, u, z} |x, u, z\rangle \otimes |D_0\rangle$ where $D_0$ is an empty list. After making the 
		$i$-th query, we have 
		\begin{equation*}
			{\sf CFourierO} \sum_{x, u, z, D} a^{i-1}_{x, u, z, D} |x, u, z\rangle \otimes |D\rangle \Rightarrow \sum_{x,u, z, D} a^{i-1}_{x, u, z, D} |x, u, z\rangle \otimes |D \oplus (x, u)\rangle
		\end{equation*}

	\item \textbf{Compressed Standard/Phase Oracles}: 
	These two models are essentially equivalent up to an application of ${\sf QFT}$ applied to the query response register.  From now on we only consider compressed phase oracles. 
	
	By applying {\sf QFT} on the $u$ entries of the database registers of a compressed Fourier oracle, we get a compressed phase oracle. 

		In this model, $D$ contains all the pair $(x_i, u_i)$ which means the oracle outputs $u_i$ on $x_i$ and uniformly at random on other inputs. 
		When making a query on $|x, u, z, D\rangle$,	
		\begin{itemize} 
			\item if $(x, u')$ is in the database $D$ for some $u'$, a phase $\omega_n^{u u'}$ will be added to the state; it corresponds to update $w$ to $w + u$ in the compressed Fourier oracle model where $w = D(x)$ in the compressed Fourier database. 
			
			\item otherwise a superposition is appended to the state $|x\rangle \otimes \sum_{u'} \omega_n^{u u'} |u'\rangle$; it corresponds to put a new pair $(x, u)$ in the list of the compressed Fourier oracle model;
			
			\item also make sure that the list will never have an $(x, 0)$ pair in the compressed Fourier oracle model (in other words, it is $|x\rangle \otimes \sum_{y} |y\rangle$ in the compressed phase oracle model); if there is one, delete that pair;
			
			\item All the `append' and `delete' operations 
			above mean applying {\sf QFT}.  
		\end{itemize}
\end{enumerate}

\section{Algorithm for Multi-collision Finding} \label{algorithm}

In this section, we give an improved algorithm for $k$-collision finding. We use the same idea from~\cite{AC:HosSasXag17} but carefully reorganize the algorithm to reduce the number of queries.

As a warm-up, let us consider the case $k=3$ and the case where $F : X \to Y$ is a $3$-to-$1$ function, $|X| = 3 |Y| = 3 N$.  
They gives an algorithm with $O(N^{4/9})$ quantum queries. 
Here is our algorithm with only $O(N^{3/7})$ quantum queries:
\begin{itemize}
	\item Prepare a list $L = \{(x_i, y_i = F(x_i) )\}_{i=1}^{t_1}$ where $x_i$ are distinct and $t_1 = N^{3/7}$. 
		 This requires $O(N^{3/7})$ classical queries on random points.
	\item Define the following function $F'$ on $X$: 
		\begin{equation*}
				F'(x) = \begin{cases}
						1,  & x \not\in \{x_1, x_2, \cdots, x_{t_1}\} \text{ and } F(x) = y_j \text{ for some } j  \\
						0, &  \text{otherwise}
					\end{cases}
		\end{equation*}
	
		Run Grover's algorithm on function $F'$. Wlog (by reordering $L$), we find $x'_1$ such that $x'_1 \ne x_1$ and 
			$F(x'_1) = F(x_1)$ using $O(\sqrt{N/N^{3/7}}) = O(N^{2/7})$ quantum queries. 
	\item  Repeat the last step $t_2 = N^{1/7}$ times,  we will have 
			$N^{1/7}$ $2$-collisions $L' = \{(x_i, x'_i, y_i)\}_{i=1}^{t_2}$. This takes $O(N^{1/7} \cdot \sqrt{N/N^{3/7}}) = O(N^{3/7})$ quantum queries.
	\item If two elements in $L'$ collide, simply output a $3$-collision. Otherwise, run Grover's on function $G$: 
		\begin{equation*}
				G(x) = \begin{cases}
						1,  & x \not\in \{x_1, x_2, \cdots, x_{t_2}, x'_1, \cdots, x'_{t_2}\} \text{ and } F(x) = y_j \text{ for some } j  \\
						0, &  \text{otherwise}
					\end{cases}
		\end{equation*}
		A $3$-collision will be found when Grover's algorithm finds a pre-image of $1$ on $G$. It takes $O(\sqrt{N/N^{1/7}}) = O(N^{3/7})$ quantum queries.
\end{itemize}
Overall, the algorithm finds a $3$-collision using $O(N^{3/7})$ quantum queries. 

The similar algorithm and analysis works for any constant $k$ and any $k$-to-$1$ function which only requires $O(N^{(2^{k-1}-1)/(2^k-1)})$ quantum queries.
Let $t_1 = N^{(2^{k-1}-1)/(2^k-1)}, t_2 = N^{(2^{k-2}-1)/(2^k-1)}, \cdots$, $ t_i= N^{(2^{k-i}-1)/(2^k-1)}, \cdots, t_{k-1} = N^{1/(2^k-1)}$.
The algorithm works as follows: 
\begin{itemize}
	\item Assume $F : X \to Y$ is a $k$-to-$1$ function and $|X| = k |Y| = k N$. 
	\item Prepare a list $L_1$ of input-output pairs of size $t_1$. With overwhelming probability ($1 - N^{-1/2^k}$), $L_1$ does not contain a collision. 
	By letting $t_0 = N$, this step makes $t_1 \sqrt{N / t_0}$ quantum queries. 
	
	\item Define a function $F_2(x)$ that returns $1$ if the input $x$ is not in $L_1$ but the image $F(x)$ collides with one of the images in $L_1$, otherwise it returns $0$. 
		Run Grover's on $F_2$  $t_2$ times.  Every time Grover's algorithm outputs $x'$, it gives a $2$-collision. 
		With probability $1-O(N^{-1/2^{k}})$ (explained below),  all these $t_2$ collisions do not collide. So we have a list $L_2$ of $t_2$  different $2$-collisions.  
		This step makes $t_2 \sqrt{N / t_1}$ quantum queries. 
	\item For $2 \leq i \leq k-1$, define a function $F_i(x)$ that returns $1$ if the input $x$ is not in $L_{i-1}$ but the image $F(x)$ collides with one of the images of $(i-1)$-collisions in $L_{i-1}$, otherwise it returns $0$. 
		Run Grover's algorithm on $F_i\,$  $t_i$ times.  Every time Grover's algorithm outputs $x'$, it gives an $i$-collision. 
		With probability $1- O(t_i^2/t_{i-1}) = 1 - O(N^{-1/2^k})$,  all these $t_i$ collisions do not collide. So we have a list $L_i$ of $t_i$  different $i$-collisions.  
		This step makes $t_i \sqrt{N / t_{i-1}}$ quantum queries. 
	\item Finally given $t_{k-1}$ $(k-1)$-collisions, using Grover's to find a single $x'$ that makes a $k$-collision with one of the $(k-1)$-collision in  $L_{k-1}$. 
		This step makes $t_k \sqrt{N / t_{k-1}}$ quantum queries by letting $t_k = 1 = N^{(2^{k-k}-1)/(2^k-1)}$. 
\end{itemize}

The number of quantum queries made by the algorithm is simply:
\begin{eqnarray*}
	\sum_{i=0}^{k-1} t_{i+1} \sqrt{N/t_i} &=& \sum_{i=0}^{k-1} \sqrt{N \frac{t^2_{i+1}}{t_i} } \\
					&=&  \sum_{i=0}^{k-1} \sqrt{N \cdot    N^{       \frac{2 \cdot \left(2^{k - (i+1)} - 1 \right) -  (2^{k-i} - 1 )}{2^k - 1}           }      } \\
					&=&  k \cdot N^{(2^{k-1}-1)/(2^k-1)}
\end{eqnarray*}

So we have the following theorem:
\begin{theorem}
	For any constant $k$, any $k$-to-$1$ function $F: X \to Y$ ($|X| = k |Y| = k N$), the algorithm above
	finds a $k$-collision using $O( N^{(2^{k-1}-1)/(2^k-1)})$ quantum queries. 
\end{theorem}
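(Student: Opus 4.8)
The plan is to analyze the algorithm one level at a time, propagating a single inductive invariant from $L_1$ down to $L_k$, and to check the query total and the success probability separately. Throughout I would use the uniform notation $t_j := N^{(2^{k-j}-1)/(2^k-1)}$, which also produces the two boundary values $t_0 = N$ and $t_k = 1$ used implicitly by the algorithm. The statement I would actually prove is the one matching the abstract: the algorithm makes $O(N^{(2^{k-1}-1)/(2^k-1)})$ quantum queries and, with constant probability, outputs a genuine $k$-collision.

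\emph{The query count.} Fix a level $i$ with $2 \le i \le k$ and assume the invariant that $L_{i-1}$ is a list of $t_{i-1}$ distinct $(i-1)$-collisions with pairwise distinct images. Because $F$ is $k$-to-$1$, each of those $t_{i-1}$ images has exactly $k$ preimages, of which $i-1$ already lie in $L_{i-1}$, so $|F_i^{-1}(1)| = t_{i-1}(k-i+1) = \Theta(t_{i-1})$ (constant $k$); and if each Grover run at level $i$ also excludes the inputs returned by the earlier runs of that level — which makes the $t_i$ produced $i$-collisions genuinely distinct — then $|F_i^{-1}(1)|$ drops by at most $t_i = o(t_{i-1})$ over the whole level and so stays $\Theta(t_{i-1})$. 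By Theorem~\ref{thm:grover}, together with a Markov argument that caps each run at a constant multiple of its expected length (a failed run is detected by re-evaluating $F$ on the output, then retried), one Grover call costs $O(\sqrt{|X|/|F_i^{-1}(1)|}) = O(\sqrt{N/t_{i-1}})$ queries and succeeds with constant probability; combining a successful output $x'$ with the $(i-1)$-collision in $L_{i-1}$ of image $F(x')$ gives an $i$-collision. Building $L_1$ accounts for the $i=0$ term, $t_1\sqrt{N/t_0} = t_1$ classical queries, and level $i$ contributes $O(t_i)$ capped runs, so the total number of queries is always
\[
\sum_{i=0}^{k-1} t_{i+1}\sqrt{N/t_i} \;=\; \sum_{i=0}^{k-1}\sqrt{N\cdot t_{i+1}^2/t_i} \;=\; \sum_{i=0}^{k-1}\sqrt{N^{(2^k-2)/(2^k-1)}} \;=\; k\cdot N^{(2^{k-1}-1)/(2^k-1)},
\]
using the identity $N\cdot t_{i+1}^2/t_i = N^{(2^k-2)/(2^k-1)}$, which holds for every $i$; for constant $k$ this is $O(N^{(2^{k-1}-1)/(2^k-1)})$.

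\emph{Correctness.} The invariant can be broken only by a birthday coincidence: at level $1$, two of the $t_1$ random samples collide, probability $O(t_1^2/N)$; at level $i \ge 2$, two of the $t_i$ fresh $i$-collisions share an image, and since those images all lie among the $t_{i-1}$ distinct images carried in $L_{i-1}$ this has probability $O(t_i^2/t_{i-1})$. Substituting the $t_j$, each of these $k-1$ probabilities equals $N^{-1/(2^k-1)}$, so a union bound keeps the invariant alive through level $k$ except with probability $O(k\,N^{-1/(2^k-1)}) = o(1)$; moreover, if one of the rare coincidences does occur, the algorithm has merely found a \emph{larger} collision and terminates no later, so this is a genuine failure bound rather than a heuristic. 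Finally, since every intermediate $t_i$ is polynomially large ($\ge N^{1/(2^k-1)}$), a Chernoff bound shows that $O(t_i)$ constant-success Grover runs yield $t_i$ successes at level $i$ except with probability $N^{-\omega(1)}$, while the single level-$k$ run succeeds with constant probability; multiplying the factors, the algorithm outputs a valid $k$-collision with constant probability — boostable to $1-o(1)$ with $O(1)$ repetitions of the last step — always within $O(N^{(2^{k-1}-1)/(2^k-1)})$ queries.

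\emph{Main obstacle.} There is no real conceptual obstacle: the one new idea — sharing a single database $L_{i-1}$ across all $t_i$ searches at level $i$, instead of rebuilding a fresh one per entry as in~\cite{AC:HosSasXag17} — is already built into the algorithm, and that amortization is exactly what makes every level cost the same. What needs care is routine: (i) checking that $|F_i^{-1}(1)|$ stays $\Theta(t_{i-1})$ after excluding previously found inputs, which is just the inequality $t_i = o(t_{i-1})$; (ii) the exponent arithmetic $N\cdot t_{i+1}^2/t_i = N^{(2^k-2)/(2^k-1)}$ that makes each level equal-cost; and (iii) converting the per-call constant Grover success probability into an overall guarantee while losing only a constant factor in queries, which the Chernoff estimate handles precisely because each $t_i$ is polynomially large. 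I would also note that essentially the same argument gives the claimed bound for any $F$ with $|X| \ge k|Y|$, since all it really uses about $F$ is that each targeted image keeps a constant number of unused preimages — the $k$-to-$1$ hypothesis of the theorem just sidesteps the extra counting this requires.
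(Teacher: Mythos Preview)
Your proposal is correct and follows essentially the same approach as the paper: the same choice of $t_i = N^{(2^{k-i}-1)/(2^k-1)}$, the same telescoping query sum $\sum_{i=0}^{k-1} t_{i+1}\sqrt{N/t_i} = k\cdot N^{(2^{k-1}-1)/(2^k-1)}$, and the same $O(t_i^2/t_{i-1}) = O(N^{-1/(2^k-1)})$ birthday bounds for image-disjointness at each level. You are in fact more careful than the paper on several implementation points --- capping each Grover run via Markov and verifying, using Chernoff to aggregate the per-call constant success probabilities across the $t_i$ runs, and explicitly excluding previously-returned inputs --- details the paper leaves implicit or glosses over.
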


We now show the above conclusion holds for an arbitrary function $F : X \to Y$ as long as $|X| \geq k |Y| = k N$. To prove this, we use the following lemma:

\begin{lemma}
	Let $F : X \to Y$ be a function and  $|X| = k |Y| = k N$. Let $\mu_F = \Pr_x\left[|F^{-1}(F(x))| \geq k \right]$ be the probability that if we choose $x$ uniformly at 
random and $y = F(x)$, the number of pre-images of $y$ is at least $k$. We have $\mu_F \geq \frac{1}{k}$. 
\end{lemma}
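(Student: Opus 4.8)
The plan is to reduce the probabilistic statement to a simple counting inequality on the fiber sizes of $F$. For each $y \in Y$, write $n_y = |F^{-1}(y)|$. Since the fibers partition $X$, we have the two identities $\sum_{y \in Y} n_y = |X| = kN$ and $|Y| = N$. Next I would rewrite $\mu_F$ by grouping the sample space of $x$ according to $y = F(x)$: since each $y$ is hit by exactly $n_y$ choices of $x$, and the event $|F^{-1}(F(x))| \ge k$ depends only on $y$, we get
\begin{equation*}
\mu_F \;=\; \frac{1}{|X|} \sum_{\substack{y \in Y \\ n_y \ge k}} n_y \;=\; \frac{1}{kN} \sum_{\substack{y \in Y \\ n_y \ge k}} n_y .
\end{equation*}
So it suffices to prove the purely combinatorial claim $\sum_{y : n_y \ge k} n_y \ge N$.

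To establish that claim I would argue by splitting the total mass $kN = \sum_y n_y$ into the ``small-fiber'' part and the ``large-fiber'' part. For every $y$ with $n_y < k$ we have $n_y \le k-1$, and there are at most $|Y| = N$ such $y$, so $\sum_{y : n_y < k} n_y \le (k-1)N$. Subtracting from the total gives
\begin{equation*}
\sum_{\substack{y : n_y \ge k}} n_y \;=\; kN - \sum_{\substack{y : n_y < k}} n_y \;\ge\; kN - (k-1)N \;=\; N,
\end{equation*}
and plugging this back into the expression for $\mu_F$ yields $\mu_F \ge N/(kN) = 1/k$, as desired. One could equally phrase this as a proof by contradiction (assume $\sum_{y:n_y\ge k} n_y < N$ and derive $\sum_{y:n_y<k} n_y > (k-1)N$, contradicting $n_y \le k-1$ over at most $N$ values), but the direct version is cleanest.

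There is no real obstacle here: the only thing to be careful about is the bookkeeping in the first step — correctly accounting for the multiplicity $n_y$ when changing the sum over $x$ into a sum over $y$ — and noting that the bound $|\{y : n_y < k\}| \le N$ uses nothing more than $|Y| = N$. The hypothesis $|X| = kN$ is used exactly once, to supply the total fiber mass $kN$, and it is precisely this slack of $kN$ versus the at-most-$(k-1)N$ that can sit in small fibers that forces at least $N$ inputs to lie in fibers of size $\ge k$.
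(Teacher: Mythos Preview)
Your proof is correct and is essentially the same argument as the paper's: both count the ``good'' inputs by observing that the ``bad'' inputs (those landing in fibers of size $<k$) number at most $(k-1)N$, leaving at least $kN-(k-1)N=N$ good inputs out of $kN$ total. Your write-up is simply a more explicit version of the paper's one-line computation.
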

\begin{proof}
    We say an input or a collision is good if its image has at least $k$ pre-images. 

	To make the probability as small as possible, we want that if $y$ has less than $k$ pre-images, $y$ should have exactly $k-1$ pre-images. 
So the probability is at least
\begin{equation*}
	\mu_F = \frac{\left|  \{ x \,|\,  x \text{ is good } \}  \right|}{|X|} \geq \frac{k N - (k-1) N}{k N} = \frac{1}{k}
\end{equation*}
\end{proof}

\begin{theorem}
	Let $F : X \to Y$ be a function and  $|X| \geq k |Y| = k N$. The above algorithm finds a $k$-collision using $O( N^{(2^{k-1}-1)/(2^k-1)})$ 
quantum queries with overwhelming probability. 
\end{theorem}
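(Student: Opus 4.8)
The plan is to run the \emph{same} algorithm, but on the restriction $F|_{X_0}$ for a fixed (publicly known, un-queried) subset $X_0\subseteq X$ of size exactly $kN$ — this is harmless, since any $k$-collision of $F|_{X_0}$ is one of $F$, and it makes the reduction to the case $|X|=kN$ trivial because the preceding Lemma is purely combinatorial and applies to $F|_{X_0}$. For the analysis I would attach to each intermediate collision a bit, \emph{never used by the algorithm}, recording whether it is \emph{good}, meaning its image has at least $k$ pre-images inside $X_0$. The Lemma says a $\ge 1/k$ fraction of $X_0$ is good, every pre-image in $X_0$ of a good image is itself good, and a good $i$-collision can be extended step by step up to a $k$-collision. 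Restricting to $X_0$ also keeps every Grover step cheap: since $|X_0|=kN=\Theta(N)$, the Grover bound $O(\sqrt{|X_0|/t})$ on an accepting set of size $t$ is just $O(\sqrt{N/t})$, with a $k$-dependent constant.

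The core is an induction on $i$: the list $L_i$ built by the algorithm contains $\Omega(t_i)$ good $i$-collisions with pairwise distinct images, and each Grover invocation in stage $i$ uses $O(\sqrt{N/t_{i-1}})$ queries. The base case $i=1$ is the Lemma plus a Chernoff bound (among $t_1$ random points at least $t_1/(2k)$ are good, except with probability $e^{-\Omega(t_1/k)}$, negligible since $t_1$ is a fixed positive power of $N$). For the inductive step I would split the accepting set $A$ of $F_i$ as $A_{\mathrm{good}}\sqcup A_{\mathrm{bad}}$, the pre-images colliding with good, resp.\ bad, $(i-1)$-collisions of $L_{i-1}$. A good $(i-1)$-collision with image $y$ contributes $|F^{-1}(y)\cap X_0|-(i-1)\ge k-i+1\ge 1$ fresh elements to $A_{\mathrm{good}}$, so $|A_{\mathrm{good}}|=\Omega(t_{i-1})$ by the inductive hypothesis and distinctness of images; a bad $(i-1)$-collision has image with at most $k-1$ pre-images in $X_0$, contributing at most $k-i\le k$ elements, so $|A_{\mathrm{bad}}|\le k\,t_{i-1}$. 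Hence $|A|\ge|A_{\mathrm{good}}|=\Omega(t_{i-1})$ (giving the query bound), and the good fraction of $A$ is at least $c_i:=|A_{\mathrm{good}}|/(|A_{\mathrm{good}}|+|A_{\mathrm{bad}}|)$, which satisfies $c_i\ge c_{i-1}/(c_{i-1}+k)\ge c_{i-1}/(k+1)$, so $c_i\ge 1/\big(2k(k+1)^{i-1}\big)=\Omega(1)$ for constant $k$. Then, exactly as in the $k$-to-$1$ analysis — sequentially forbidding already-found collisions so the $t_i$ Grover runs return distinct collisions, incurring the usual $O(t_i^2/t_{i-1})$ loss — plus one more Chernoff bound on how many runs land in $A_{\mathrm{good}}$, the resulting $L_i$ has $\Omega(t_i)$ good $i$-collisions with distinct images. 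Finally, any point returned by Grover on $G=F_k$ lies outside, and collides with, some $(k-1)$-collision of $L_{k-1}$, hence is a genuine $k$-collision of $F$; and $G$'s accepting set is nonempty of size $\Omega(t_{k-1})$, since a bad $(k-1)$-collision has image with exactly $k-1$ pre-images in $X_0$ and contributes nothing, while the $\Omega(t_{k-1})$ good ones each contribute at least one. Summing $t_{i+1}\sqrt{N/t_i}$ over the $k$ stages reproduces $O(N^{(2^{k-1}-1)/(2^k-1)})$.

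The main obstacle is the inductive step: one must verify that the good fraction $c_i$ does not decay to $0$ as $i$ grows (the loss is only a $k$-dependent constant per level), and then marry this with the somewhat fiddly probabilistic bookkeeping for sequential Grover runs — near-uniformity of each run's output, not re-finding old collisions too often, distinctness of images, and a Chernoff bound on the number of good outcomes — so that $\Omega(t_i)$ good collisions genuinely survive into $L_i$ with overwhelming probability (dominated by the failure probabilities already present in the $k$-to-$1$ case). Everything else is routine: the reduction to $|X|=kN$ is immediate from the Lemma, the Grover speeds are unchanged because $|X_0|=\Theta(N)$, and since $k$ is constant all accumulated $k$-dependent constants are absorbed into the $O(\cdot)$.
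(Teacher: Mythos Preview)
Your proposal is correct and takes essentially the same approach as the paper: restrict to a subset of size $kN$, track ``good'' collisions (those whose image has at least $k$ pre-images) through the stages $L_1,\dots,L_{k-1}$ via Chernoff bounds, and conclude the final Grover step succeeds because good $(k-1)$-collisions can always be extended. Your analysis of the good fraction via the split $A=A_{\mathrm{good}}\sqcup A_{\mathrm{bad}}$ and the recursion $c_i\ge c_{i-1}/(c_{i-1}+k)$ is in fact more careful than the paper's sketch, which simply asserts that each new collision is good ``with probability $0.999\mu_F$'' without justifying why the good fraction of Grover's accepting set should equal $\mu_F$; your decomposition is the rigorous fix for that step.
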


\begin{proof} We prove the case $|X| = k|Y|$.  The case $|X| > k |Y|$ follows readily by choosing an arbitrary subset $X'\subseteq X$ such that $|X'|=k|Y|$ and restrict the algorithm to the domain $X'$.

	As what we did in the previous algorithm, in the list $L_1$, with overwhelming probability, there are $0.999 \mu_F \cdot t_1$ good inputs 
	by Chernoff bound because every input is good with probability $\mu_F$.  Then every $2$-collision in $L_2$ has probability $0.999 \mu_F$ to be good. So by Chernoff bound, 
	$L_2$ contains at least $0.999^2 \mu_F t_2$ good $2$-collisions with overwhelming probability.  
	 By induction, in the final list $L_{k-1}$, with overwhelming probability, there are $0.999^{k-1} \mu_F \cdot t_{k-1}$ 
	good $(k-1)$-collisions. Finally, the algorithm outputs a $k$-collision with probability $1$, by making at most $O(\sqrt{N / (0.99^{k-1} \mu_F t_{k-1})})$ quantum queries.  
	
	As long as $k$ is a constant, the coefficients before $t_i$ are all constants. The number of quantum queries is scaled by a constant and is still $O( N^{(2^{k-1}-1)/(2^k-1)})$ 
	and the algorithm succeeds with overwhelming probability.  
\end{proof}

\section{Lower Bound for Multi-collision Finding} \label{lowerboundsection}

\subsection{Idea in~\cite{EPRINT:Zhandry18}}

We will first show how Zhandry re-proved the lower bound of $2$-collision finding using compressed oracle technique. 
The idea is that when we are working under compressed phase/standard oracle model, a query made by the adversary $(x, u)$ can be 
recorded in the compressed oracle database.

Suppose before making the next quantum query, the current joint state is the following 
\begin{equation*}
	|\phi\rangle = \sum_{x, u, z, D} a_{x, u, z, D} |x, u, z\rangle \otimes |D\rangle
\end{equation*}
where $x$ is the query register, $u$ is the response register, $z$ is the private storage of the adversary and $D$ is the database in the compressed phase oracle model. 
Consider measuring $D$ after running the algorithm.  Because  the algorithm only has information about the points in the database $D$, the only way to have a non-trivial probability of finding a collision is for the $D$ that results from measurement to have a collision. More formally, here is a lemma from \cite{EPRINT:Zhandry18}. 
\begin{lemma}[Lemma 5 from \cite{EPRINT:Zhandry18}]
     Consider a quantum algorithm $A$ making queries to a random oracle $H$ and outputting tuples $(x_1,\cdots,x_k,y_1,\cdots,y_k,z)$. Let $R$ be a collection of such tuples. Suppose with probability $p$, $A$ outputs a tuple such that (1) the tuple is in $R$ and (2) $H(x_i)= y_i$ for all $i$. Now consider running $A$ with compressed standard/phase oracle, and suppose the database $D$ is measured after $A$ produces its output. Let $p'$ be the probability that (1) the tuple is in $R$, and (2) $D(x_i) = y_i$ for all $i$ (and in particular $D(x_i) \ne \bot$). Then
     \begin{equation*}
         \sqrt{p} \leq \sqrt{p'} + \sqrt{k / 2^n}
     \end{equation*}
\end{lemma}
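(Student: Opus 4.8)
The plan is to carry out the whole argument inside the compressed phase oracle picture, exploiting the equivalence of the oracle models recalled above. Running $A$ against a genuine random oracle yields a final joint state $|\psi\rangle_{\mathrm{std}}$ on the adversary's registers (query, response, private, and the output registers holding $(x_1,\dots,x_k,y_1,\dots,y_k,z)$) together with the truth-table register, while running $A$ against the compressed phase oracle yields a state $|\psi\rangle_{\mathrm{comp}}$ with the same adversary part but a sparse-database register; the two are related by a fixed isometry $\mathsf{Decomp}$ (decompress the database into a full function by padding with zeros, then apply the ${\sf QFT}$s), which acts as the identity on the adversary's registers, with $|\psi\rangle_{\mathrm{std}} = \mathsf{Decomp}\,|\psi\rangle_{\mathrm{comp}}$. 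First I would rewrite both probabilities as squared projector norms: $p = \|\Pi_{\mathrm{std}}\,|\psi\rangle_{\mathrm{std}}\|^2$, where $\Pi_{\mathrm{std}}$ coherently projects the output registers onto tuples in $R$ and, reading $x_i,y_i$ off those registers, projects the $x_i$-slot of the truth table onto $|y_i\rangle$ for every $i$; and $p' = \|\Pi_{\mathrm{comp}}\,|\psi\rangle_{\mathrm{comp}}\|^2$, where $\Pi_{\mathrm{comp}}$ is the same but projects the database onto those $D$ with $x_i\in D$ and $D(x_i)=y_i$ for all $i$.

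Since $\mathsf{Decomp}$ is an isometry, $\sqrt{p'} = \|\tilde\Pi\,|\psi\rangle_{\mathrm{std}}\|$ with $\tilde\Pi := \mathsf{Decomp}\,\Pi_{\mathrm{comp}}\,\mathsf{Decomp}^\dagger$, so the triangle inequality gives $\sqrt{p}\le\sqrt{p'}+\|(\Pi_{\mathrm{std}}-\tilde\Pi)\,|\psi\rangle_{\mathrm{std}}\|$ and the task reduces to bounding the last term by $\sqrt{k/2^n}$. As $\mathsf{Decomp}$ ignores the output registers, $\Pi_{\mathrm{std}}$ and $\tilde\Pi$ share the same projection of the output onto $R$; decomposing $|\psi\rangle_{\mathrm{std}}$ over the orthogonal output-register values, it therefore suffices to show, for each fixed tuple $(x_1,\dots,x_k,y_1,\dots,y_k)$ occurring in $R$ and each unit vector $|\phi\rangle$ of the compressed oracle register,
\[ \Big\| \textstyle\prod_i |y_i\rangle\langle y_i|_{x_i}\,\mathsf{Decomp}\,|\phi\rangle \;-\; \mathsf{Decomp}\,\textstyle\prod_i \Lambda_{x_i,y_i}\,|\phi\rangle \Big\| \;\le\; \sqrt{k/2^n}\,, \]
where $\Lambda_{x^*,y^*}$ projects the database onto $x^*\in D,\,D(x^*)=y^*$. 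Coinciding $x_i$ only reduce the number of genuine constraints (and if two share an $x_i$ with different $y_i$ both sides vanish), so we may take the $x_i$ distinct.

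The one nontrivial ingredient is the local-decompression fact about the compressed phase oracle, implicit in the description above and proven in \cite{EPRINT:Zhandry18}: after $\mathsf{Decomp}$, the $x^*$-slot of a database with $x^*\notin D$ is the uniform superposition, and with $x^*\in D$ it is the recorded value $D(x^*)$ up to a correction of amplitude $O(2^{-n/2})$ (the two differing only through the ``$w=0$'' component of the Fourier picture). I would use it by decomposing $|\phi\rangle=\sum_{S\subseteq\{1,\dots,k\}}|\phi_S\rangle$ according to which $x_i$ are \emph{absent} from the database, so the $|\phi_S\rangle$ are orthogonal with $\sum_S\||\phi_S\rangle\|^2=1$. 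On $|\phi_\emptyset\rangle$ the two operators above agree, so the difference lives on $\bigoplus_{S\ne\emptyset}|\phi_S\rangle$; and for each $i\in S$, applying $|y_i\rangle\langle y_i|_{x_i}$ to the uniform $x_i$-slot contributes a factor $2^{-n/2}$, giving $\|\prod_i|y_i\rangle\langle y_i|_{x_i}\,\mathsf{Decomp}\,|\phi_S\rangle\| \le 2^{-n|S|/2}\||\phi_S\rangle\|$. Hence the difference is at most $\sum_{S\ne\emptyset}2^{-n|S|/2}\||\phi_S\rangle\|$, and Cauchy--Schwarz against $\sum_S\||\phi_S\rangle\|^2=1$ bounds it by $\big(\sum_{S\ne\emptyset}2^{-n|S|}\big)^{1/2}=\big((1+2^{-n})^k-1\big)^{1/2}$, which is $\le\sqrt{k/2^n}$ up to a $(1+o(1))$ factor (and the clean $\sqrt{k/2^n}$ also follows by instead applying the $k=1$ case of the statement $k$ times, at a slight loss for small $n$).

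The step I expect to be the real obstacle is getting the exponent of $k$ down to $1/2$: a plain hybrid over the $k$ points and the triangle inequality only gives $k\cdot 2^{-n/2}$, whereas the bound needs $\sqrt{k}\cdot 2^{-n/2}$. The fix is exactly the decomposition over the ``absent set'' $S$ followed by Cauchy--Schwarz above — the binomial factor $\binom{k}{|S|}$ is killed by $2^{-n|S|}$, so only $|S|=1$ matters — but one must be careful that the $|\phi_S\rangle$ really form an orthogonal family summing to $1$ and that the local-decompression correction is a genuine $O(2^{-n/2})$ in vector norm, not merely in probability. Everything else is bookkeeping: checking that $\mathsf{Decomp}$ leaves the adversary's registers alone (so the two projectors share their $R$-part), and noting that the asymmetry by which ``$D(x_i)=y_i$'' silently demands $x_i\in D$ is precisely what makes the inequality one-sided and supplies the $2^{-n/2}$ per absent point.
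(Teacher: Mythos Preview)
The paper does not actually prove this lemma: it is quoted verbatim as ``Lemma~5 from \cite{EPRINT:Zhandry18}'' and used as a black box, so there is no in-paper proof to compare your proposal against.

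For what it is worth, your sketch is broadly the right shape and matches the argument in Zhandry's original paper: pull both events back to projectors on the joint adversary--oracle state, relate them through the decompression isometry, and bound the discrepancy coming from points $x_i$ that are absent from the compressed database. One point to tighten: you assert that on $|\phi_\emptyset\rangle$ (all $x_i$ present in $D$) the two projectors \emph{agree}, but by your own ``local-decompression fact'' a recorded value $D(x^*)=y^*$ decompresses to $|y^*\rangle$ only up to an $O(2^{-n/2})$ correction (the subtracted uniform component). So there is a residual error on $|\phi_\emptyset\rangle$ as well, of order $\sqrt{k}\cdot 2^{-n/2}$ after the same Cauchy--Schwarz bookkeeping; this is harmless for the final bound but should not be dropped silently. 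Everything else---the orthogonal decomposition over the absent set $S$, the $2^{-n|S|/2}$ suppression from projecting a uniform slot onto a point, and the Cauchy--Schwarz step that turns the naive $k\cdot 2^{-n/2}$ into $\sqrt{k/2^n}$---is exactly how the lemma is established in \cite{EPRINT:Zhandry18}.
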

As long as $k$ is small, the difference is negligible. So we can focus on bounding the probability $p'$. 

%Let $P_0$ be a projection spanned by all the states with $D$ containing no collision in the compressed standard model.
Let $\tilde{P}_1$ be a projection spanned by all the states with $z, D$ containing at least one collision in the compressed phase oracles. In other words, $z$ contains $x \ne x'$ such that $D(x) \ne \bot$, $D(x') \ne \bot$ and $D(x) = D(x')$. 
\begin{equation*}
	\tilde{P}_1 = \sum_{\substack{x, u, z \\ z, D \text{: $\geq 1$ collision}}} |x, u, z, D\rangle \langle x, u, z, D|  
\end{equation*}
We care about the amplitude (square root of the probability) $\left|\tilde{P}_1  |\phi\rangle \right|$. As in the above lemma, $\left|\tilde{P}_1 |\phi\rangle \right| = \sqrt{p'}$ and  $k = 2$. Moreover, we can bound the amplitude of the following measurement. 
\begin{equation*}
	P_1 = \sum_{\substack{x, u, z \\ D \text{: $\geq 1$ collision}}} |x, u, z, D\rangle \langle x, u, z, D|  
\end{equation*}

Here ``$D:\, \geq 1 \text{ collision}$'' meaning $D$ as a compressed phase oracle, it has a pair of $x \ne x'$ such that $D(x) = D(x')$. 
It is easy to see $|P_1 |\phi\rangle| \geq |\tilde{P}_1 |\phi\rangle|$. So we will focus on bounding $|P_1 |\phi\rangle|$ in the rest of the paper. 

\medskip

For every $|x, u, z, D\rangle$, after making one quantum query, the size of $D$ will increase by at most $1$. 
Let $|\phi_i\rangle$ be the state before making the $(i+1)$-th quantum query and $|\phi'_i\rangle$ be the state after it.  
Let $O$ be the unitary over the joint system corresponding to an oracle query, in other words, $|\phi'_i\rangle = O |\phi_i\rangle$.
By making $q$ queries, the computation looks like the following:
\begin{itemize} 
	\item At the beginning, it has $|\phi_0 \rangle$;  
	\item For $1 \leq i \leq q$, it makes a quantum query; the state $|\phi_{i-1}\rangle$ becomes $|\phi'_{i-1}\rangle$; 
					and it applies a unitary on its registers $U^i \otimes {\sf id}$ to get $|\phi_i\rangle$ where $U^i$ is some unitary defined over the registers $x, u, z$. 
	\item Finally measure it using $P_1$, the probability of finding a collision (in the compressed phase oracle) is at most $|P_1 |\phi_q\rangle |^2$
\end{itemize}

We have the following two lemmas:
\begin{lemma} \label{unitarylemma}
	For any unitary $U^i$, 
		\begin{equation*}
			|P_1 |\phi'_{i-1}\rangle | = |P_1  \cdot  (U^i \otimes {\sf id}) \cdot |\phi'_{i-1} \rangle | = |P_1 |\phi_i \rangle |
		\end{equation*}
\end{lemma}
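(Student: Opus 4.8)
The plan is to observe that the projection $P_1$ factors as the identity on the adversary's registers tensored with a projection acting only on the database register, while $U^i$ touches only the adversary's registers; the two therefore commute, and the claim reduces to the fact that unitaries preserve the Euclidean norm.

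In more detail, first I would rewrite $P_1$ in tensor-product form. Since the sum defining $P_1$ ranges over \emph{all} $x,u,z$ and only constrains $D$, we have
\begin{equation*}
	P_1 = \left(\sum_{x,u,z} |x,u,z\rangle\langle x,u,z|\right) \otimes \left(\sum_{D\,:\,\geq 1 \text{ collision}} |D\rangle\langle D|\right) = {\sf id}_{x,u,z} \otimes \Pi_D ,
\end{equation*}
where $\Pi_D$ is the projector onto compressed-oracle databases containing at least one collision. It is precisely here that it matters we work with $P_1$ rather than $\tilde{P}_1$: the latter refers to collisions recorded in the adversary's private register $z$ (it projects onto $z,D$ jointly), so it would not factor this way, whereas $P_1$ does. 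This is the whole reason the earlier discussion replaced $\tilde{P}_1$ by $P_1$.

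Next, since $U^i$ is a unitary acting only on the $x,u,z$ registers, $U^i \otimes {\sf id}$ commutes with ${\sf id}_{x,u,z} \otimes \Pi_D = P_1$: indeed $(U^i \otimes {\sf id})({\sf id}_{x,u,z} \otimes \Pi_D) = U^i \otimes \Pi_D = ({\sf id}_{x,u,z} \otimes \Pi_D)(U^i \otimes {\sf id})$. Hence
\begin{equation*}
	P_1 \cdot (U^i \otimes {\sf id}) \cdot |\phi'_{i-1}\rangle = (U^i \otimes {\sf id}) \cdot P_1 \cdot |\phi'_{i-1}\rangle ,
\end{equation*}
and since $U^i \otimes {\sf id}$ is unitary it preserves norms, so $|P_1 (U^i \otimes {\sf id}) |\phi'_{i-1}\rangle| = |P_1 |\phi'_{i-1}\rangle|$. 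Finally, by definition $|\phi_i\rangle = (U^i \otimes {\sf id})|\phi'_{i-1}\rangle$, which gives the second equality $|P_1 (U^i \otimes {\sf id})|\phi'_{i-1}\rangle| = |P_1 |\phi_i\rangle|$, completing the chain.

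There is essentially no real obstacle here; the only point that needs care is the register bookkeeping, i.e.\ confirming that $P_1$ genuinely ignores the adversary's registers (in contrast to $\tilde{P}_1$) so that the commutation with $U^i \otimes {\sf id}$ is exact rather than approximate. Everything else is the standard fact that applying a unitary supported on a subsystem disjoint from the support of a projection leaves the projected norm unchanged.
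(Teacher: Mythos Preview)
Your proposal is correct and takes essentially the same approach as the paper: both exploit that $P_1$ acts only on the database register while $U^i$ acts only on the adversary's registers, so the projected norm is unchanged. The paper phrases this via a direct computation $\sqrt{\sum_{D:\,\geq 1\text{ coll.}} |U^i|\psi_D\rangle|^2}=\sqrt{\sum_{D:\,\geq 1\text{ coll.}} ||\psi_D\rangle|^2}$ after writing $|\phi'_{i-1}\rangle=\sum_D |\psi_D\rangle\otimes|D\rangle$, whereas you state it as a commutation $P_1(U^i\otimes{\sf id})=(U^i\otimes{\sf id})P_1$; these are the same argument.
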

\begin{proof}
    Intuitively, $P_1$ is a measurement on the oracle's register and $U^i$ is a unitary on the adversary's registers, applying the unitary does not affect the measurement $P_1$. 

    Because $U^i$ is a unitary defined over the registers $x, u, z$ and $P_1$ is a projective measurement defined over the database register $D$, we have 
    \begin{align*}
        \left|P_1 \cdot (U^i \otimes {\sf id }) \cdot |\phi'_{i-1}\rangle \right| &= \left| P_1 \cdot (U^i \otimes {\sf id}) \cdot \sum_{\substack{x, u, z, D}} \alpha_{x, u, z, D} |x, u, z, D\rangle \right| \\
        &= \left| P_1 \cdot (U^i \otimes {\sf id}) \cdot \sum_{\substack{D}} |\psi_D\rangle \otimes |D\rangle \right| \\
        &= \sqrt{\sum_{\text{$D \geq 1 $ collision }} |U^i |\psi_D\rangle|^2}  = \sqrt{\sum_{\text{$D \geq 1 $ collision }} | |\psi_D\rangle|^2}
    \end{align*}
    which is the same as $|P_1 |\phi'_{i-1}\rangle|$.
    
\end{proof}

\begin{lemma} \label{querylemma}
	$\left|P_1  |\phi'_i \rangle \right| \leq \left|  P_1 |\phi_i\rangle \right| + \frac{\sqrt{i}}{\sqrt{N}}$.
\end{lemma}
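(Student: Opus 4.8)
The plan is to split the pre-query state $|\phi_i\rangle$ into the part whose recorded database already has a collision and the part that does not, and to show that a single query can only increase the projection onto collision-databases through the second part, by an amount governed by how many points have been recorded so far. Writing $P_1|\phi_i\rangle$ and $(I-P_1)|\phi_i\rangle$ for the two pieces and using $|\phi'_i\rangle=O|\phi_i\rangle$ with $O$ unitary,
\begin{equation*}
\bigl|P_1|\phi'_i\rangle\bigr|=\bigl|P_1 O|\phi_i\rangle\bigr|\leq \bigl|P_1 O P_1|\phi_i\rangle\bigr|+\bigl|P_1 O(I-P_1)|\phi_i\rangle\bigr|\leq \bigl|P_1|\phi_i\rangle\bigr|+\bigl|P_1 O(I-P_1)|\phi_i\rangle\bigr|,
\end{equation*}
where the first term is bounded using $\bigl|P_1 O P_1|\phi_i\rangle\bigr|\leq\bigl|O P_1|\phi_i\rangle\bigr|=\bigl|P_1|\phi_i\rangle\bigr|$. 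So it suffices to prove $\bigl|P_1 O(I-P_1)|\phi_i\rangle\bigr|\leq \sqrt{i}/\sqrt{N}$.

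To do that I would expand $(I-P_1)|\phi_i\rangle=\sum_{x,u,z,D}b_{x,u,z,D}|x,u,z,D\rangle$ over basis states whose compressed phase oracle database $D$ contains no collision; since each of the first $i$ queries adds at most one recorded point, every such $D$ has at most $i$ entries. Applying the compressed phase oracle update rule recalled earlier to one such basis state gives three cases: if $x$ is already recorded in $D$, the query only multiplies that branch by a phase $\omega_n^{u\,D(x)}$ and leaves $D$ unchanged, so no collision is created; if $x$ is not in $D$ and $u=0$, the entry that would be appended is the uniform state and is deleted, again leaving $D$ unchanged; and if $x\notin D$ and $u\neq 0$, a fresh pair $(x,\widetilde{u})$ is appended with $\widetilde{u}$ in uniform-phase superposition, i.e. with amplitude of magnitude exactly $1/\sqrt{N}$ on each of the $N$ candidate values. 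A collision appears in the new database precisely when $\widetilde{u}$ hits one of the at most $i$ values already present in $D$, so projecting this branch with $P_1$ leaves a vector of norm at most $\sqrt{|\mathrm{Im}(D)|}/\sqrt{N}\le\sqrt{i}/\sqrt{N}$.

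The remaining step is to combine these per-branch contributions without losing the square root. Here I would note that a nonzero output basis state $|x,u,z,D'\rangle$ produced from $(x,u,z,D)$ has $D'$ equal to $D$ with one extra point at $x$; and since the originating $D$ was collision-free, the (necessarily present) collision in $D'$ must involve that new point, which lets me recover $D$ from $D'$ by deleting the collision-completing point at $x$. Expanding $\bigl|P_1 O(I-P_1)|\phi_i\rangle\bigr|^2$ as a sum over output basis states and regrouping by the originating $(x,u,z,D)$ then yields $\sum_{x,u,z,D}|b_{x,u,z,D}|^2\,|\mathrm{Im}(D)|/N\le (i/N)\sum_{x,u,z,D}|b_{x,u,z,D}|^2=(i/N)\,\bigl\|(I-P_1)|\phi_i\rangle\bigr\|^2\le i/N$, and taking square roots closes the argument.

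I expect the main obstacle to be exactly this last regrouping: making rigorous the claim that an output state, together with the knowledge that it lies in the image of $P_1$, pins down the branch it came from, so that contributions with the same private register $z$ are still added in squared norm rather than in norm. One has to be careful to rule out (or account for) a collision-containing $D'$ arising from two different $(D,x)$, which is where collision-freeness of the pre-query databases is essential. A secondary technical point is confirming, from Zhandry's decompression unitary, that a fresh entry genuinely receives amplitude of magnitude exactly $1/\sqrt{N}$ on each value — so the constant in the bound is $1$ rather than merely $O(1)$ — and checking that the $u=0$ and "already recorded" branches contribute nothing.
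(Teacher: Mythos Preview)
Your proposal is correct and follows essentially the same route as the paper: split $|\phi_i\rangle$ via $P_1$ and $I-P_1$, bound the first piece by unitarity, and for the second piece argue that a single query on a collision-free database with at most $i$ entries can only land in the image of $P_1$ through the at most $i$ ``bad'' values of the freshly appended output, each carrying amplitude $1/\sqrt{N}$. The paper's proof simply writes the final $\ell_2$ bound without justifying why the contributions add in squared norm; your explicit recoverability argument (deleting the collision-completing point at $x$ to recover the originating $D$) is exactly the orthogonality step the paper leaves implicit, so you are if anything more careful than the original.
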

\begin{proof}

We have 
\begin{eqnarray*}
	|P_1  |\phi'_i\rangle| &=&  |P_1 O |\phi_i\rangle|   \\
			&=&  \left|P_1 O  \left( P_1 |\phi_i\rangle + (I - P_1)  |\phi_i\rangle  \right)  \right| \\
			&\leq&    \left|P_1 O P_1 |\phi_i \rangle \right| + \left| P_1 O (I - P_1)  |\phi_i\rangle    \right|  \\
			&\leq&  \left|P_1 |\phi_i\rangle\right| +  \left| P_1 O (I - P_1)  |\phi_i\rangle    \right|  
\end{eqnarray*}

$|P_1 O P_1 |\phi_i\rangle| \leq |P_1 |\phi_i\rangle| $ is because $P_1 |\phi_i\rangle$ contains only $D$ with collisions. By making one more query, the total magnitude will not increase. 

So we only need to bound the second term $|P_1 O (I - P_1) |\phi_i\rangle|$. 
$(I - P_1) |\phi\rangle$ contains only states $|x, u, z, D\rangle$ that $D$ has no collision. 
If after applying $O$ to a state $|x, u, z, D\rangle$, the size of $D$  does not increase (stays the same or becomes smaller), 
the new database still does not contain any collision. Otherwise, it becomes 
$\sum_{u'} \omega_n^{u u'} |x, u, z, D \oplus (x, u')\rangle$.  And only $|D| \leq i$ out of $N$ possible $D\oplus (x, u')$ contain a collision. 
\begin{eqnarray*}
	\left| P_1 O (I - P_1)  |\phi_i\rangle    \right|    &=&  \left| P_1 O  \sum_{\substack{x, u, z, D \\ D : \text{ no collision}}} a_{x, u, z, D} |x, u, z, D\rangle    \right|  \\
					&=&   \left| P_1 \sum_{\substack{x, u, z, D \\ D : \text{ no collision}}} \frac{1}{\sqrt{N}}  \sum_{u'} \omega_n^{u u' } a_{x, u, z, D} |x, u, z, D \oplus (x, u')\rangle    \right|  \\
					&\leq&    \left( \sum_{\substack{x, u, z, D \\ D : \text{ no collision}}}  \frac{i}{N} \cdot  a^2_{x, u, z, D}    \right)^{1/2} \leq \frac{\sqrt{i}}{\sqrt{N}}  \\
\end{eqnarray*}

\end{proof}

By combining lemma \ref{unitarylemma} and lemma \ref{querylemma}, we have that $|P_1 |\phi_i\rangle| \leq \sum_{j=1}^{i-1} \frac{\sqrt{j}}{\sqrt{N}} = O(i^{3/2} / N^{1/2})$.  So we re-prove the following theorem: 
\begin{theorem}  \label{2collemma}
	For any quantum algorithm, given  a random function $f: X \to Y$ where $|Y| = N$, 
		it needs to make  $\Omega(N^{1/3})$ quantum queries to find a $2$-collision with constant probability.
\end{theorem}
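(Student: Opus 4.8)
The plan is to chain the two lemmas just established into an inductive bound on the amplitude $|P_1|\phi_i\rangle|$, and then convert that amplitude bound into a bound on the adversary's genuine success probability using Lemma~5 of~\cite{EPRINT:Zhandry18}.

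First I would pin down the indexing: $|\phi_i\rangle$ is the state just before the $(i+1)$-th query, $|\phi'_i\rangle = O|\phi_i\rangle$ is the state just after it, and $|\phi_{i+1}\rangle = (U^{i+1}\otimes{\sf id})|\phi'_i\rangle$. Lemma~\ref{unitarylemma} says the adversary's private unitary does not change the projected norm, so $|P_1|\phi_{i+1}\rangle| = |P_1|\phi'_i\rangle|$; Lemma~\ref{querylemma} says $|P_1|\phi'_i\rangle| \le |P_1|\phi_i\rangle| + \sqrt{i}/\sqrt{N}$. Combining, $|P_1|\phi_{i+1}\rangle| \le |P_1|\phi_i\rangle| + \sqrt{i/N}$. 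The database register starts empty, so $|P_1|\phi_0\rangle| = 0$, and telescoping over $i=0,\dots,q-1$ gives $|P_1|\phi_q\rangle| \le \sum_{i=1}^{q-1}\sqrt{i/N} = O(q^{3/2}/N^{1/2})$ after $q$ queries.

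Next I would pass from this to the true success probability. Let $R$ be the set of output tuples $(x_1,x_2,y_1,y_2,z)$ with $x_1\neq x_2$ and $y_1=y_2$, and let $p$ be the probability the algorithm outputs such a tuple with $f(x_1)=y_1$, $f(x_2)=y_2$. Apply Lemma~5 of~\cite{EPRINT:Zhandry18} with $k=2$: running the same algorithm against the compressed phase oracle and measuring $D$ at the end, the probability $p'$ that the output tuple lies in $R$ and $D(x_1)=y_1$, $D(x_2)=y_2$ satisfies $\sqrt{p}\le\sqrt{p'}+\sqrt{2/N}$. Any such measured $D$ contains a collision, so $\sqrt{p'}=|\tilde P_1|\phi_q\rangle|\le|P_1|\phi_q\rangle|=O(q^{3/2}/N^{1/2})$. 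Hence $\sqrt{p}=O(q^{3/2}/N^{1/2})$, i.e.\ $p=O(q^3/N)$, and for $p$ to be a positive constant we need $q=\Omega(N^{1/3})$, which is the claim.

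Since all the work is in Lemmas~\ref{unitarylemma} and~\ref{querylemma} and in Lemma~5 of~\cite{EPRINT:Zhandry18}, the only care needed is bookkeeping: the telescoped sum must be over the right range, $|P_1|\phi_0\rangle|$ must genuinely be $0$, and the additive $\sqrt{2/N}$ from the measurement lemma must be absorbed into the $O(q^{3/2}/N^{1/2})$ term (true whenever $q\ge 1$). I do not expect a real obstacle at this level; the genuine difficulty is deferred to the $k$-collision case, where the scalar bound on $|P_1|\phi_i\rangle|$ must be replaced by the multiply-recursive quantities $g^{(k)}_{q,r,s}$ and the ``one query adds at most one point'' estimate must be combined with the fact that the amplitude on databases already containing a $(k-1)$-collision is itself small.
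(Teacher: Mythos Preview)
Your proposal is correct and follows essentially the same approach as the paper: chain Lemmas~\ref{unitarylemma} and~\ref{querylemma} to get $|P_1|\phi_i\rangle|\le\sum_{j=1}^{i-1}\sqrt{j/N}=O(i^{3/2}/N^{1/2})$, then invoke Lemma~5 of~\cite{EPRINT:Zhandry18} (with $k=2$) together with $|\tilde P_1|\phi\rangle|\le|P_1|\phi\rangle|$ to pass to the true success probability. The bookkeeping you flag (initial amplitude zero, absorbing the $\sqrt{2/N}$ term) is exactly what the paper does implicitly.
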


\subsection{Intuition for generalizations}

Here is the intuition for $k=3$: 
as we have seen in the proof for $k=2$, after $T_1 = O(N^{1/3})$
quantum queries, the database has high probability to contain a $2$-collision.  Following the same formula,  after making $T_2$ queries, the amplitude that it contains two $2$-collisions is about  
	\begin{equation*}
		\sum_{T_1 + 1}^{T_2} \frac{\sqrt{i}}{\sqrt{N}} = O\left( \frac{T_2^{3/2} - T_1^{3/2}}{\sqrt{N}}  \right) \,\Rightarrow\,  T_2 = O(2^{2/3} N^{1/3})
	\end{equation*}
And similarly after $T_i = O(i^{2/3} N^{1/3})$, the database will contain $i$ $2$-collisions. Now we just assume between the $(T_{i-1}+1)$-th query and $T_i$-th query, the database contains exactly $(i-1)$ $2$-collisions.

Every time a quantum query is made to a database with $i$  $2$-collisions,  with probability at most $i / N$, the new database will contain a $3$-collision. Similar to the lemma \ref{querylemma}, when we make queries until the database contains  $m$  $2$-collisions, the amplitude that it contains a $3$-collision in the database is at most 
\begin{equation*}
	\sum_{i=1}^m  \frac{\sqrt{i}}{\sqrt{N}} \left(T_i - T_{i-1} \right) \approx  \int_1^m  \frac{x^{1/6}}{N^{1/6}} {\sf d}x \approx x^{7/6} / N^{1/6}
\end{equation*}
which gives us that the number of $2$-collisions is $m = N^{1/7}$. And the total number of quantum queries is $T_m = m^{2/3} \cdot N^{1/3} = N^{3/7}$ which is what we expected.

In the following sections, we will show how to bound the probability/amplitude of finding  a $k=2, 3, 4$-collision and any constant $k$-collision with constant probability. All the proof ideas are explained step by step through the proof for $k=2, 3, 4$. The proof for any constant $k$ is identical to the proof for $k = 4$ but every parameter is replaced with functions of $k$. 

\subsection{Lower Bound for $2$-collisions}

Let $P_{2, j}$ be a projection spanned by all the states with $D$ containing at least $j$ distinct $2$-collisions  in the compressed phase oracle model. 
\begin{eqnarray*}
	P_{2, j}&=& \sum_{\substack{x, u, z \\ D \text{: $\geq j$  $2$-collisions}}} |x, u, z, D\rangle \langle x, u, z, D| 
\end{eqnarray*}

Let the current joint state be $|\phi\rangle$ (after making $i$ quantum queries but before the $(i+1)$-th query), and $|\phi'\rangle$ be the state after making the $(i+1)$-th quantum query. 
\begin{equation*}
	|\phi\rangle = \sum_{x, u, z, D} a_{x, u,  z, D} |x, u\rangle \otimes |z, D\rangle
\end{equation*}
We have the  relation following from lemma \ref{querylemma}: 
\begin{eqnarray*}
		|P_{2, 1} |\phi'\rangle| &\leq&   |P_{2,1} |\phi\rangle| + \frac{\sqrt{i}}{\sqrt{N}} \\
		|P_{2, j} |\phi'\rangle| &\leq&  |P_{2,j} |\phi\rangle| + \frac{\sqrt{i}}{\sqrt{N}} |P_{2, j-1} |\phi'\rangle|    \text{ for all $j > 0$} 
\end{eqnarray*}

Let $|\phi_0\rangle, |\phi_1\rangle, \cdots, |\phi_i\rangle$ be the state after making $0, 1, \cdots, i$ quantum queries respectively. 
Let $f_{i, j} = |P_{2, j} |\phi_i\rangle|$. We rewrite the relations using $f_{i, j}$: 
\begin{eqnarray*}
		f_{i, 1} &\leq&  f_{i-1,1} + \frac{\sqrt{i-1}}{\sqrt{N}} = \sum_{0 \leq l < i} \frac{\sqrt{l}}{\sqrt{N}} < \frac{i^{3/2}}{\sqrt{N}} \\
		f_{i, j} &\leq&  
		f_{i - 1, j} + \frac{\sqrt{i - 1}}{{N}} f_{i - 1, j - 1} \\
		&=& \sum_{0 \leq l_1 < i}\frac{\sqrt{l_1}}{\sqrt{N}} f_{l_1,  j-1}   \\
					&=&  \sum_{0 \leq l_2 < l_1 < i} \frac{\sqrt{l_1}}{\sqrt{N}} \frac{\sqrt{l_2}}{\sqrt{N}} f_{l_2,  j-2}     \\  
					&=& \sum_{0 \leq l_j < l_{j-1} < \cdots < l_2 < l_1 < i} \prod_{k=1}^j \left(  \frac{\sqrt{l_k}}{\sqrt{N}}\right) \\
					&<& \frac{1}{j!} \sum_{0 \leq l_j, l_{j-1}, \cdots, l_2, l_1 < i} \prod_{k=1}^j \left(  \frac{\sqrt{l_k}}{\sqrt{N}}\right) \\
					&=&  \frac{1}{j!} \left( f_{i, 1} \right)^j    \\
					&<&  \left(\frac{e \cdot i^{3/2}}{j \sqrt{N}}\right)^j
\end{eqnarray*}

We observe that when $i = o(j^{2/3} N^{1/3})$, $f_{i, j} = o(1)$. 

\begin{corollary}
For any quantum algorithm, given a random function $f : X \to Y$ where $|Y| = N$, by making $i$ queries, the probability of finding constant $j$ $2$-collisions is at most $O\left(({\frac{i^{3}}{N}})^j\right)$. 
\end{corollary}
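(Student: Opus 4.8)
The plan is to read the corollary directly off the amplitude bound just established, adding only two small steps: squaring an amplitude to obtain a probability, and translating from the compressed-oracle picture back to a genuine random oracle. Recall that $f_{i,j}=\bigl|P_{2,j}\,|\phi_i\rangle\bigr|$ is, by construction, the amplitude on compressed phase oracle databases $D$ that contain at least $j$ distinct $2$-collisions after $i$ queries; equivalently, $f_{i,j}^{2}$ is the probability that measuring $D$ at that point returns such a database. Unrolling the recursion $f_{i,j}\le f_{i-1,j}+\tfrac{\sqrt{i-1}}{\sqrt N}\,f_{i-1,j-1}$ (established above via Lemmas~\ref{unitarylemma} and~\ref{querylemma}) gave $f_{i,j}<\bigl(\tfrac{e\,i^{3/2}}{j\sqrt N}\bigr)^{j}$, so the compressed-oracle success probability satisfies $p'=f_{i,j}^{2}<\bigl(\tfrac{e}{j}\bigr)^{2j}\bigl(\tfrac{i^{3}}{N}\bigr)^{j}$.

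To finish, I would apply Lemma~5 of~\cite{EPRINT:Zhandry18} with $R$ the set of output tuples that encode $j$ distinct $2$-collisions, i.e.\ a tuple of length $k=2j$ in that lemma's notation: it yields $\sqrt{p}\le\sqrt{p'}+\sqrt{2j/N}$, where $p$ is the probability the algorithm outputs $j$ valid distinct $2$-collisions for the true random oracle. Since $j$ is constant, $\bigl(\tfrac{e}{j}\bigr)^{2j}$ is an absolute constant, so after squaring the $p'$ contribution is $O\bigl((i^{3}/N)^{j}\bigr)$; the additive error contributes only $O(1/N)$, which is lower order in the range of $i$ for which the statement is non-trivial (and when $i^{3}\ge N$ the bound is vacuous, as $p\le 1$). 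This gives $p=O\bigl((i^{3}/N)^{j}\bigr)$ with the hidden constant depending on $j$ alone.

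I do not expect a real obstacle: all of the quantum content — soundness of the compressed oracle, the $\sqrt{i/N}$ per-query amplitude increase, and the extra multiplicative factor $f_{i-1,j-1}$ arising because the $j$-th collision must be \emph{newly} created — is already packaged in Lemmas~\ref{unitarylemma}, \ref{querylemma} and Lemma~5 of~\cite{EPRINT:Zhandry18}, and the recursion has been solved. The one point I would watch is that Lemma~5's additive term grows with the tuple length $2j$, so I would keep $j$ constant throughout and simply absorb $\sqrt{2j/N}$ as a lower-order term; nothing sharper is needed for the corollary as stated.
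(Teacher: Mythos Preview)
Your proposal is correct and matches the paper's approach: the corollary is stated there without a separate proof, as an immediate consequence of the amplitude bound $f_{i,j}<\bigl(\tfrac{e\,i^{3/2}}{j\sqrt N}\bigr)^{j}$, and your plan---square to pass from amplitude to probability, then invoke Lemma~5 of~\cite{EPRINT:Zhandry18} with tuple length $2j$ to transfer from the compressed oracle to the true random oracle---is exactly what is implicit. Your care about the additive $\sqrt{2j/N}$ term and the vacuous regime $i^{3}\geq N$ is appropriate and in fact more explicit than the paper itself.
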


\begin{theorem}
	For any quantum algorithm, given a random function $f : X \to Y$ where $|Y| =  N$, it needs to make $\Omega(j^{2/3} N^{1/3})$
quantum queries to find $j$ $2$-collisions with constant probability. 
\end{theorem}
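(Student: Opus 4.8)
The statement is essentially the query-complexity form of the amplitude bound $f_{i,j}<\left(\frac{e\, i^{3/2}}{j\sqrt N}\right)^{j}$ established just above, so the plan is short: convert that amplitude bound into a bound on the success probability via Zhandry's database-measurement lemma, and then invert it. First I would invoke the reduction lemma of~\cite{EPRINT:Zhandry18} quoted above, taking the relation $R$ to be the set of all tuples $(x_1,\dots,x_{2j},y_1,\dots,y_{2j},z)$ whose $2j$ inputs are distinct and split into $j$ pairs, each a $2$-collision under the claimed images; the lemma's parameter ``$k$'' is then $2j$. This gives, for any $q$-query algorithm producing $j$ distinct $2$-collisions with probability $p$,
\[
\sqrt p \;\le\; \sqrt{p'} \;+\; \sqrt{2j/N},
\]
where $p'$ is the probability that, after running the algorithm under the compressed phase oracle and measuring the database register, the measured $D$ is itself consistent with those $j$ collisions.

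Next I would bound $p'$ exactly as in the $j=1$ argument leading to Theorem~\ref{2collemma}: the event defining $p'$ implies $D$ contains at least $j$ distinct $2$-collisions, so $p' \le |P_{2,j}\,|\phi_q\rangle|^2 = f_{q,j}^2$ (the projector $P_{2,j}$, which ignores the private register, dominates the projector onto the relevant $(z,D)$ configurations). Plugging in the bound coming from Lemmas~\ref{unitarylemma} and~\ref{querylemma} and the recursion they yield, $p' < \left(\frac{e\, q^{3/2}}{j\sqrt N}\right)^{2j}$. To finish, suppose $p\ge c$ for a fixed constant $c\in(0,1]$; then $\sqrt c \le \left(\frac{e\, q^{3/2}}{j\sqrt N}\right)^{j}+\sqrt{2j/N}$, and in the regime $j=o(N)$ the last term is $o(1)$, so for $N$ large $\left(\frac{e\, q^{3/2}}{j\sqrt N}\right)^{j}>\tfrac12\sqrt c$. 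Since $j\ge 1$ and $\tfrac12\sqrt c\le 1$, this forces $\frac{e\, q^{3/2}}{j\sqrt N}>\tfrac12\sqrt c$, i.e. $q > \left(\frac{\sqrt c}{2e}\right)^{2/3} j^{2/3}N^{1/3}=\Omega(j^{2/3}N^{1/3})$.

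I do not expect any real obstacle here: all the substantive work --- defining the projectors $P_{2,j}$, proving the per-query increment $|P_{2,j}\,|\phi'\rangle|\le |P_{2,j}\,|\phi\rangle|+\tfrac{\sqrt i}{\sqrt N}\,|P_{2,j-1}\,|\phi'\rangle|$, and solving the resulting multi-index recursion --- has already been carried out above; this theorem merely packages it. The only points needing care are bookkeeping: that ``$k$'' in the reduction lemma is $2j$ (hence the additive error $\sqrt{2j/N}$, negligible precisely when $j=o(N)$), and that taking a $j$-th root of a fixed constant costs only a constant factor in the final exponent, so the $\Omega(j^{2/3}N^{1/3})$ bound holds uniformly in $j$.
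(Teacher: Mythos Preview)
Your proposal is correct and follows essentially the same route as the paper: the paper does not give a separate proof of this theorem, it simply records the observation that $f_{i,j}<\bigl(\tfrac{e\,i^{3/2}}{j\sqrt N}\bigr)^{j}=o(1)$ once $i=o(j^{2/3}N^{1/3})$ and states the theorem, leaving the passage through the database-measurement lemma and the inversion of the inequality implicit. You have filled in exactly those omitted details --- the invocation of the reduction lemma with parameter $2j$, the domination $p'\le f_{q,j}^2$, and the elementary $j$-th-root step --- so your write-up is, if anything, more complete than the paper's.
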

%Which we can re-derive Theorem \ref{2collemma}. 

\subsection{Lower Bound for $3$-collisions}
Let $P_{3, k}$ be a projection spanned by all the states with $D$ containing at least $k$ distinct $3$-collisions in the compressed phase model. 
And let $P_{3, j, k}$ be a projection spanned by all the states with $D$ containing \textbf{exactly} $j$ distinct $2$-collisions and at  least $k$ $3$-collisions. 

Let the current joint state be $|\phi\rangle$ (after making $i$ quantum queries but before the $(i+1)$-th query), and $|\phi'\rangle$ be the state after making the $(i+1)$-th quantum query. 
We have the following relation similar to Lemma \ref{querylemma}: 
\begin{align*}
		|P_{3, k} |\phi'\rangle| & \leq |P_{3, k} |\phi\rangle| + \left| P_{3, k} \sum_{l \geq 0} \sum_{\substack{x, u, z \\ D: \text{ exactly $l$ $2$-collisions} \\ \text{exactly $k-1$ $3$-collision} }} \frac{1}{\sqrt{N}} \sum_{u'} \omega_n^{u u'}\cdot \alpha_{x, u, z, D} |x, u, z, D \oplus (x, u')\rangle  \right|
		%& \leq |P_{3, k} |\phi\rangle| + \left| \sum_{l \geq 0}  \sqrt{\frac{l}{N}} \cdot P_{3, l, k - 1} |\phi\rangle \right|
\end{align*}
where the first term means $D$ already contains at least $k$ $3$-collisions before the query; and the second term is the case where a new $3$-collision is added into the database. Similar to Lemma \ref{querylemma}, only $l$ out of $N$ $u'$ will make $D \oplus (x, u')$ contain $k$ $3$-collisions. So we have, 
\begin{align*}
		|P_{3, k} |\phi'\rangle| & \leq |P_{3, k} |\phi\rangle| + \sqrt{ \sum_{l \geq 0} \frac{l}{N} \sum_{\substack{x, u, z \\ D: \text{ exactly $l$ $2$-collisions} \\ \text{exactly $k-1$ $3$-collision} }}  |\alpha|^2_{x, u, z, D}} \\
		& \leq |P_{3, k} |\phi\rangle| + \sqrt{ \sum_{l \geq 0}  \frac{l}{N} |P_{3, l, k - 1} |\phi\rangle|^2 }
		%& \leq |P_{3, k} |\phi\rangle| + \left| \sum_{l \geq 0}  \sqrt{\frac{l}{N}} \cdot P_{3, l, k - 1} |\phi\rangle \right|
\end{align*}

Let $g_{i, k}$ be the amplitude $|P_{3, k} |\phi_i\rangle|$ and 
$g_{i, j, k} = |P_{3, j, k} |\phi_i\rangle|$. 
It is easy to see $g_{i, 0} \leq 1$ for any $i \geq 0$ since it is an amplitude.
We have the following:
\begin{equation*}
	g_{i, k} \leq g_{i-1, k} + \sqrt{\sum_{l \geq 0} \frac{l}{N} \cdot g^2_{i-1, l, k-1}}
\end{equation*}

Let $f_{i, j} = |P_{2, j} |\phi_i\rangle|$. 
Define $h_3(i) = \max\{ 2 e \cdot \frac{i^{3/2}}{\sqrt{N}}, 10 N^{1/8} \}$. We have the following lemma: 
\begin{lemma} \label{recur13}
	\begin{equation*}
		g_{i, k} \leq g_{i - 1, k} + \sqrt{\frac{h_3(i-1)}{{N}}} g_{i - 1, k - 1} + f_{i-1, h_3(i - 1)}
	\end{equation*}
\end{lemma}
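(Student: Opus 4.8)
The plan is to bound the quantity $\sqrt{\sum_{l\ge 0}\frac{l}{N}\,g_{i-1,l,k-1}^2}$ appearing in the recursion for $g_{i,k}$, by splitting the sum over $l$ at the threshold $h_3(i-1)$. The point is that $h_3$ is chosen so that for $l$ below the threshold we can afford to drop the factor $g_{i-1,l,k-1}\le g_{i-1,k-1}$ and pay only $\sqrt{h_3(i-1)/N}$, while for $l$ above the threshold the amplitude $g_{i-1,l,k-1}$ is dominated by $f_{i-1,l}$ (the amplitude on databases with at least $l$ $2$-collisions, which we have already bounded in the previous subsection), and moreover $f_{i-1,l}$ decays fast enough in $l$ that the tail $\sqrt{\sum_{l>h_3(i-1)}\frac{l}{N}f_{i-1,l}^2}$ is itself bounded by $f_{i-1,h_3(i-1)}$.

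Concretely, I would proceed as follows. First, observe that $g_{i,l,k-1}\le g_{i,0,l}$-type monotonicity gives $g_{i-1,l,k-1}\le f_{i-1,l}$ whenever $k-1\ge 1$ (any database with $l$ $2$-collisions and at least one $3$-collision in particular has $l$ $2$-collisions, so its projector is dominated; and $k\ge 2$ in the regime of interest since $g_{i,1}$-type base cases are handled separately). Also $g_{i-1,l,k-1}\le g_{i-1,k-1}$ since exact-$l$-$2$-collisions-and-$\ge(k-1)$-$3$-collisions is a sub-event of $\ge(k-1)$-$3$-collisions. Now split:
\begin{align*}
\sqrt{\sum_{l\ge 0}\frac{l}{N}g_{i-1,l,k-1}^2}
&\le \sqrt{\sum_{0\le l\le h_3(i-1)}\frac{l}{N}g_{i-1,l,k-1}^2}
+\sqrt{\sum_{l> h_3(i-1)}\frac{l}{N}g_{i-1,l,k-1}^2}\\
&\le \sqrt{\frac{h_3(i-1)}{N}}\,g_{i-1,k-1}
+\sqrt{\sum_{l> h_3(i-1)}\frac{l}{N}f_{i-1,l}^2},
\end{align*}
using $\sum_l g_{i-1,l,k-1}^2\le g_{i-1,k-1}^2\le 1$ (orthogonality of the exact-$l$ projectors) to control the first piece. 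It then remains to show the tail term is at most $f_{i-1,h_3(i-1)}$. For this I would plug in the bound $f_{i,j}\le\bigl(e\,i^{3/2}/(j\sqrt N)\bigr)^{j}$ from the previous subsection: for $l> h_3(i-1)\ge 2e\,(i-1)^{3/2}/\sqrt N$, the base $e(i-1)^{3/2}/(l\sqrt N)<1/2$, so $f_{i-1,l}^2$ decays geometrically in $l$ faster than $\frac1N$ grows linearly, and the sum is dominated (up to a constant absorbed by the generous constants $2e$ and $10$ in the definition of $h_3$) by its first term; the extra requirement $h_3(i-1)\ge 10N^{1/8}$ ensures this first term $f_{i-1,h_3(i-1)}$ times the remaining $\sqrt{l/N}\le\sqrt{1}$-type factors still collapses cleanly, and handles the small-$i$ regime where $i^{3/2}/\sqrt N$ is tiny. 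Substituting back into $g_{i,k}\le g_{i-1,k}+\sqrt{\sum_l\frac lN g_{i-1,l,k-1}^2}$ yields exactly the claimed inequality.

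The main obstacle I expect is the tail estimate $\sqrt{\sum_{l>h_3(i-1)}\frac{l}{N}f_{i-1,l}^2}\le f_{i-1,h_3(i-1)}$ — getting the geometric-decay comparison to close with the specific constants $2e$ and $10N^{1/8}$ baked into $h_3$, and making sure it works uniformly over all $i$ including the range where $2e\,i^{3/2}/\sqrt N < 10N^{1/8}$ so the $\max$ is governed by the second argument. Everything else (the two monotonicity inequalities, the split, the orthogonality bound $\sum_l g_{i-1,l,k-1}^2\le 1$) is routine once the projector definitions are unwound; the real content is choosing $h_3$ large enough that the tail is negligible yet small enough that $\sqrt{h_3(i-1)/N}$ remains a usable per-step increment in the subsequent solution of the recursion.
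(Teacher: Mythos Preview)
Your low-$l$ piece matches the paper exactly. For the tail, however, you work harder than necessary. You pass through the pointwise bound $g_{i-1,l,k-1}\le f_{i-1,l}$ and then plan to control $\sum_{l>h_3(i-1)}\frac{l}{N}f_{i-1,l}^2$ via the explicit geometric decay of $f_{i-1,l}$. The paper instead observes that every $l$ appearing with nonzero weight satisfies $l\le i-1<N$, so the factor $\frac{l}{N}\le 1$ is simply dropped; then, because the exact-$l$ projectors are pairwise orthogonal, $\sum_{l>h_3(i-1)}g_{i-1,l,k-1}^2$ equals the probability of having more than $h_3(i-1)$ distinct $2$-collisions together with at least $k-1$ distinct $3$-collisions, which is trivially at most the probability of having at least $h_3(i-1)$ $2$-collisions, i.e.\ $f_{i-1,h_3(i-1)}^2$. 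No appeal to the explicit bound $f_{i,j}\le(e\,i^{3/2}/(j\sqrt N))^j$ is needed at this point---that formula enters only in the \emph{next} lemma, when one bounds $\sum_{l<i} f_{l,h_3(l)}$.

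So the ``main obstacle'' you flag is in fact a one-line monotonicity/orthogonality observation once $l/N$ is replaced by $1$. Your route would also close (the geometric decay does dominate the linear factor $l/N$, and the constants in $h_3$ are generous enough), but it requires case analysis over the two branches of the $\max$ in $h_3$ and some bookkeeping to absorb the resulting constant, whereas the paper's argument is uniform in $i$ and indifferent to which branch is active.
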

\begin{proof}
	\begin{eqnarray*}
			g_{i, k} &\leq& g_{i-1, k} + \sqrt{\sum_{l \geq 0} \frac{l}{N} \cdot g^2_{i-1, l, k-1}}		\\
						&\leq& g_{i -1, k} + \sqrt{\sum_{0 \leq l \leq h_3(i-1)} \frac{l}{{N}} \cdot g^2_{i-1, l, k-1}} + \sqrt{\sum_{l > h_3(i-1)} 1 \cdot g^2_{i-1, l, k-1} }		\\
						&\leq& g_{i -1, k} + \sqrt{\frac{h_3(i-1)}{{N}}} \cdot \sqrt{\sum_{l \geq 0} g^2_{i-1, l, k-1}} + \sqrt{\sum_{l > h_3(i-1)} g^2_{i-1, l, k-1} }		\\	
						&\leq&  g_{i - 1, k} + \sqrt{\frac{h_3(i-1)}{{N}}} \cdot g_{i - 1, k - 1} + f_{i-1, h_3(i-1)}
	\end{eqnarray*}
Here, in the last line, we used the fact that $\sum_{l \geq 0} g^2_{i-1, l, k-1}$ represents the total probability of the database having $k-1$ distinct 3-collisions, and so is equal to $g^2_{i-1,k-1}$.  Similarly, we used that $\sum_{l > h_3(i-1)} g^2_{i-1, l, k-1}$ represents the total probability of having at least $k-1$ distinct 3-collisions \emph{and} at least $h_3(i-1)$ distinct 2-collisions.  This probability is bounded above by the probability of just having at least $h_3(i-1)$ distinct 2-collisions, which is $f^2_{i-1, h_3(i-1)}$.
\end{proof}

\begin{lemma}
	Define $A_i = \sum_{l = 0}^{i-1} \sqrt{\frac{h_3(l)}{N}}$. Then $g_{i, k}$ can be bounded as the following:
	\begin{equation*}
		g_{i, k} \leq \frac{A_i^k}{k!} + 2^{-N^{1/8}} \,\,\,\,\,\,\, \text{ for all } i \leq N^{1/2},  1 \leq k \leq N^{1/8}
	\end{equation*}  
\end{lemma}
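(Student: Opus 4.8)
The plan is to prove, by induction on $k$, the slightly strengthened statement: for all $i\le N^{1/2}$ and all $k\ge 1$,
\[
	g_{i,k} \;\le\; \frac{A_i^k}{k!} \;+\; \delta\sum_{j=0}^{k-1}\frac{A_i^j}{j!}, \qquad\text{where }\;\delta := \sum_{0\le l< N^{1/2}} f_{l,h_3(l)},
\]
and then to observe that $\delta\sum_{j=0}^{k-1}A_i^j/j! \le \delta\,e^{A_i} \le 2^{-N^{1/8}}$ once $A_i$ and $\delta$ are bounded. The point of carrying the full ``convolution'' error $\delta\sum_j A_i^j/j!$, rather than a single additive constant, is explained at the end; it is what makes the recursion self-similar.

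For the base case $k=1$, I would unroll Lemma~\ref{recur13} down to $i=0$. Since the empty database has no $3$-collision, $g_{0,1}=0$, and since $g_{l,0}$ is an amplitude, $g_{l,0}\le 1$; hence $g_{i,1}\le \sum_{l=0}^{i-1}\left(\sqrt{h_3(l)/N}\,g_{l,0}+f_{l,h_3(l)}\right)\le \sum_{l=0}^{i-1}\sqrt{h_3(l)/N}+\delta = A_i+\delta$, which is exactly the claim for $k=1$. For the inductive step, assume the bound at level $k-1$ (for every $l\le N^{1/2}$). Unrolling Lemma~\ref{recur13} at level $k$ (using $g_{0,k}=0$) and substituting the hypothesis at each $l$ gives
\[
	g_{i,k} \;\le\; \sum_{l=0}^{i-1}\sqrt{h_3(l)/N}\,g_{l,k-1} + \delta \;\le\; \sum_{l=0}^{i-1}\sqrt{h_3(l)/N}\left(\frac{A_l^{k-1}}{(k-1)!}+\delta\sum_{j=0}^{k-2}\frac{A_l^j}{j!}\right) + \delta.
\]
Now $\sqrt{h_3(l)/N}=A_{l+1}-A_l$, and $(A_l)_l$ is non-decreasing with $A_0=0$, so a Riemann-sum comparison yields, for every $m\ge 0$, $\sum_{l=0}^{i-1}(A_{l+1}-A_l)A_l^m\le \int_0^{A_i}t^m\,dt = A_i^{m+1}/(m+1)$, i.e. $\sum_{l=0}^{i-1}\sqrt{h_3(l)/N}\,A_l^m/m!\le A_i^{m+1}/(m+1)!$. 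Applying this termwise collapses the right-hand side to $A_i^k/k! + \delta\sum_{j=0}^{k-2}A_i^{j+1}/(j+1)! + \delta = A_i^k/k! + \delta\sum_{j=0}^{k-1}A_i^j/j!$, closing the induction.

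It remains to bound $\delta$ and $A_i$. By the bound $f_{l,j}<\left(e\,l^{3/2}/(j\sqrt N)\right)^j$ from the $2$-collision analysis, and since the definition of $h_3$ forces $h_3(l)\ge 2e\,l^{3/2}/\sqrt N$ (hence makes the base at most $1/2$ in either regime of the $\max$), we get $f_{l,h_3(l)}<2^{-h_3(l)}\le 2^{-10N^{1/8}}$, so $\delta< N^{1/2}\,2^{-10N^{1/8}}$. For $i\le N^{1/2}$ and any $l<i$ we have $h_3(l)\le \max\{2e\,N^{1/4},\,10N^{1/8}\}=2e\,N^{1/4}$ once $N$ is large, so $\sqrt{h_3(l)/N}\le \sqrt{2e}\,N^{-3/8}$ and $A_i\le \sqrt{2e}\,N^{1/8}$. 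Therefore
\[
	\delta\sum_{j=0}^{k-1}\frac{A_i^j}{j!} \;\le\; \delta\,e^{A_i} \;<\; N^{1/2}\,2^{-10N^{1/8}}\,e^{\sqrt{2e}\,N^{1/8}} \;=\; N^{1/2}\,2^{(\sqrt{2e}/\ln 2\,-\,10)N^{1/8}} \;\le\; 2^{-N^{1/8}}
\]
for all sufficiently large $N$, since $\sqrt{2e}/\ln 2 - 10 < -1$; combined with the induction this gives $g_{i,k}\le A_i^k/k! + 2^{-N^{1/8}}$ on the claimed range $i\le N^{1/2}$, $1\le k\le N^{1/8}$.

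The step I expect to be the main obstacle — and the place the argument most easily goes wrong — is the bookkeeping of the error in the inductive step. The tempting shortcut of replacing $g_{l,k-1}$ by $A_l^{k-1}/(k-1)!+\epsilon$ for a single constant $\epsilon$ and re-unrolling loses a factor $\max_{i\le N^{1/2}}A_i\approx\sqrt{2e}\,N^{1/8}$ per level, so after $k\le N^{1/8}$ levels the accumulated error would balloon to $\approx(N^{1/8})^{N^{1/8}}\delta$, which is \emph{not} $\le 2^{-N^{1/8}}$ for large $N$. Carrying the full partial exponential $\delta\sum_j A_i^j/j!$ instead makes the telescoping identity for powers of $A_i$ apply uniformly to the error terms, so the accumulated error never exceeds $e^{A_i}\delta\le e^{\sqrt{2e}N^{1/8}}\delta$, which the doubly-exponentially small $2^{-10N^{1/8}}$ in $\delta$ absorbs with room to spare. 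The remaining ingredients — the integral/Riemann-sum comparison and checking that $h_3$ pins the base of $f_{l,h_3(l)}$ below $1/2$ — are routine.
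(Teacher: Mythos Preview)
Your proof is correct and follows essentially the same approach as the paper: unroll Lemma~\ref{recur13}, bound the accumulated $f$-terms by a doubly-exponentially small $\delta$, carry the error as $\delta\sum_j A_i^j/j!\le \delta\,e^{A_i}$, and finish by showing $A_i=O(N^{1/8})$ for $i\le N^{1/2}$. The only cosmetic differences are that you package the argument as an induction on $k$ (the paper writes out the full multi-index expansion) and obtain the $1/k!$ via a Riemann-sum/integral comparison rather than the paper's ordered-tuples-versus-all-tuples count; these are equivalent formulations of the same step.
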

\begin{proof}
	If we expand Lemma \ref{recur13}, we have 
	    \begin{align*}
	        g_{i, k} &\leq g_{i - 1, k} + \sqrt{\frac{h_3(i-1)}{{N}}} g_{i - 1, k - 1} + f_{i-1, h_3(i - 1)} \\
	           &\leq g_{i - 2, k} + \sum_{l=i-2}^{i-1}\left( \sqrt{\frac{h_3(l)}{{N}}} g_{l, k - 1} + f_{l, h_3(l)}\right) \\
	           & \vdots \\
	           &\leq g_{0, k} + \sum_{l=0}^{i-1}\left( \sqrt{\frac{h_3(l)}{{N}}} g_{l, k - 1} + f_{l, h_3(l)}\right)
	    \end{align*}
	    where if $k \geq 1$, $g_{0, k} = 0$. Next, 
	
		\begin{eqnarray*}
			g_{i, k} &\leq& \sum_{0 \leq l < i} \sqrt{\frac{h_3(l)}{{N}}} g_{l, k - 1} + \sum_{0 \leq l < i} f_{l, h_3(l)} \\
					&\leq&  \sum_{0 \leq l < i} \sqrt{\frac{h_3(l)}{{N}}} g_{l, k - 1} +  N^{1/2} \left( \frac{1}{2} \right)^{10  N^{1/8} } \\
					&\leq&  \sum_{0 \leq l < i} \sqrt{\frac{h_3(l)}{{N}}} g_{l, k - 1} +   2^{- 9.5 N^{1/8}}
		\end{eqnarray*}

By recursively expanding the inequality, we will get 
        \begin{align*}
            g_{i, k} &\leq \sum_{0 \leq l_1 < i} \sqrt{\frac{h_3(l_1)}{N}} g_{l_1, k - 1} + 2^{-9.5 N^{1/8}} \\
            &\leq \sum_{0 \leq l_1 < i} \sqrt{\frac{h_3(l_1)}{N}} \left( \sum_{0 \leq l_2 < l_1} \sqrt{\frac{h_3(l_2)}{N}} g_{l_2, k - 2} + 2^{-9.5 N^{1/8}} \right) + 2^{-9.5 N^{1/8}} \\
            &\leq \sum_{0 \leq l_1 < i} \sqrt{\frac{h_3(l_1)}{N}} \left( \sum_{0 \leq l_2 < l_1} \sqrt{\frac{h_3(l_2)}{N}} \left( \sum_{0 \leq l_3 < l_2} \sqrt{\frac{h_3(l_3)}{N}}  \cdots \right) + 2^{-9.5 N^{1/8}} \right) + 2^{-9.5 N^{1/8}} \\
            &= \sum_{0 \leq l_k < l_{k-1} < \cdots < l_1 < i} \prod_{j=1}^k \left( \sqrt{\frac{h_3(l_j)}{N}} \right) + 2^{-9.5 N^{1/8}} \sum_{t=0}^{k-1} \sum_{0 \leq l_t < l_{k-1} < \cdots < l_1 < i} \prod_{j=1}^t \left( \sqrt{\frac{h_3(l_j)}{N}} \right) \\
            & < \frac{A_i^k}{k!} + \sum_{t = 0}^{k-1} \frac{A_i^t}{t!} \cdot 2^{-9.5 N^{1/8}} \\
            & < \frac{A_i^k}{k!} + e^{A_i} \cdot 2^{-9.5 N^{1/8}}
        \end{align*}

We then bound $A_i$ for all $i \leq N^{1/2}$ (we can always assume $i = o(N^{1/2})$, because  finding any constant-collision using $O(N^{1/2})$ quantum queries is easy by a quantum computer, just repeatedly applying Grover's algorithm): 
		\begin{eqnarray*}
			A_i &\leq& \sum_{l=1}^i \frac{\sqrt{2 e \cdot l^{3/2}}}{N^{3/4}} + \sum_{l: h_3(l) = 10 N^{1/8}} \frac{\sqrt{10 N^{1/8}}}{N^{1/2}} \\
				&\leq& \sqrt{2 e} \cdot \frac{i^{7/4}}{N^{3/4}} + O\left(N^{-1/48} \right)
		\end{eqnarray*}
Which implies $A_i< 2 e \cdot N^{1/8}$ (by letting $i = \sqrt{N}$). So we complete the proof:
		\begin{eqnarray*}
			g_{i, k}	 &\leq&  \frac{A_i^k}{k!} + e^{A_i } \cdot 2^{-9.5 N^{1/8}}	\\
						 &\leq&   \frac{A_i^k}{k!} + e^{2 e \cdot N^{1/8}} \cdot  2^{- 9.5 N^{1/8}} \\
						 &<&   \frac{A_i^k}{k!} +  2^{- N^{1/8}} 
		\end{eqnarray*}

\end{proof}

\begin{theorem}
	For any quantum algorithm, given a random function $f : X \to Y$ where $|Y| =N$, it needs to make $\Omega(j^{4/7} N^{3/7})$
quantum queries to find $j$ $3$-collisions for any $j \leq N^{1/8}$ with constant probability.
\end{theorem}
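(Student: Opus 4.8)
The plan is to convert the amplitude bound $g_{i,k}\le A_i^k/k! + 2^{-N^{1/8}}$ from the previous lemma into a bound on the success probability, using Zhandry's bridging lemma (Lemma~5 of~\cite{EPRINT:Zhandry18}) to pass from the true random oracle to the compressed phase oracle. Concretely, suppose a $q$-query algorithm outputs $j$ distinct $3$-collisions of $f$ with probability $p$. This is a pre-image problem whose output is a tuple of $3j$ input/output pairs $(x_1,y_1,\dots,x_{3j},y_{3j})$ lying in the set $R$ of tuples that encode $j$ distinct $3$-collisions (equal $y$'s within each block, distinct blocks, distinct inputs). Running the same algorithm against the compressed phase oracle and measuring the database $D$ at the end, Lemma~5 gives $\sqrt p \le \sqrt{p'} + \sqrt{3j/N}$, where $p'$ is the probability that the measured $D$ is consistent with the output tuple. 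A database consistent with a tuple in $R$ in particular contains at least $j$ distinct $3$-collisions, so $p' \le |P_{3,j}\,|\phi_q\rangle|^2 = g_{q,j}^2$, and therefore $\sqrt p \le g_{q,j} + \sqrt{3j/N}$.

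Next I would set $q = c\, j^{4/7} N^{3/7}$ for a small absolute constant $c$ and first check the hypotheses of the previous lemma: since $1\le j \le N^{1/8}$ we have $j^{4/7}\le N^{1/14}$, hence $q \le c\,N^{1/14+3/7} = c\,N^{1/2} \le N^{1/2}$, and $1\le j\le N^{1/8}$, so the lemma applies and $g_{q,j} \le A_q^j/j! + 2^{-N^{1/8}}$. Using the bound $A_i \le \sqrt{2e}\, i^{7/4}/N^{3/4} + O(N^{-1/48})$ established inside the proof of that lemma, together with $q^{7/4} = c^{7/4} j\, N^{3/4}$, gives $A_q \le \sqrt{2e}\,c^{7/4} j + O(N^{-1/48})$. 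Applying Stirling in the form $j! \ge (j/e)^j$ then yields
\[
g_{q,j} \le \left(\frac{e A_q}{j}\right)^{j} + 2^{-N^{1/8}} \le \left(e\sqrt{2e}\,c^{7/4} + o(1)\right)^{j} + 2^{-N^{1/8}},
\]
where the $o(1)$ absorbs the $O(N^{-1/48})$ error (harmless since $j\ge 1$). Choosing $c$ small enough that $e\sqrt{2e}\,c^{7/4} < \tfrac12$, the base is below $1$, so for every $1\le j\le N^{1/8}$ the right-hand side is at most a constant $\rho<1$ plus a negligible term; combining with $\sqrt{3j/N} = o(1)$ gives $p \le (\rho + o(1))^2 < 1$ for large $N$. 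Shrinking $c$ further drives $p$ below any prescribed constant, so any algorithm achieving constant success probability must use $q = \Omega(j^{4/7} N^{3/7})$ queries.

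Since the substantive analytic work — the multiply-recursive estimate for $g_{i,k}$ and the bound on $A_i$ — is already complete, the only care needed here is bookkeeping. The two points to watch are: (i) invoking Lemma~5 with the correct number $3j$ (not $j$) of recorded input/output pairs and verifying that the additive error $\sqrt{3j/N}$ remains negligible, which is precisely where the restriction $j\le N^{1/8}$ enters; and (ii) tracking the constants through Stirling and the $A_q$ estimate so that a genuinely small $c$ makes the $j$-th power collapse to something bounded away from $1$. I do not expect any obstacle beyond this routine bookkeeping.
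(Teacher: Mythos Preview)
Your proposal is correct and follows essentially the same route as the paper: both plug the amplitude bound $g_{i,j}\le A_i^j/j!+2^{-N^{1/8}}$ and the estimate $A_i\le\sqrt{2e}\,i^{7/4}/N^{3/4}+O(N^{-1/48})$ into a Stirling bound $(eA_q/j)^j$ and then read off the query threshold. The only presentational differences are that you treat all $j$ uniformly by choosing $q=c\,j^{4/7}N^{3/7}$ with a small constant $c$ (whereas the paper splits into constant versus non-constant $j$, defining $i^*$ via $A_{i^*}<j/(2e)$), and you explicitly invoke Lemma~5 of~\cite{EPRINT:Zhandry18} with $3j$ pairs to bridge to the compressed oracle, a step the paper handles once at the start of Section~4.1 and leaves implicit here. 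One small inaccuracy: the restriction $j\le N^{1/8}$ is needed primarily because it is a hypothesis of the preceding lemma (the bound on $g_{i,k}$ is only proved for $k\le N^{1/8}$), not merely to control the $\sqrt{3j/N}$ error term.
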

\begin{proof}We have two cases:

\begin{itemize}
\item When $j$ is a constant: If $i^* = o(N^{3/7})$, we have 
				$g_{i^*, j} \leq o(1) + O( N^{-1/48} )$. 

\item When $j$ is not a constant:	
For any $j$, let $i^*$ be the largest integer such that $A_{i^*} < \frac{1}{2 e} \cdot j$. In this case, 
			$i^* = O\left(j^{4/7} N^{3/7} \right)$. 
So the probability of having at least $j$ $3$-collisions is bounded by $g^2_{i^*, j}$ where 
$g_{i^*, j} \leq (e A_{i^*} / j)^j + 2^{-N^{1/8}} \leq 2^{- j + 1} + 2^{-N^{1/8}} = o(1)$. 
\end{itemize}		
\end{proof}

\subsection{Lower Bound for $4$-collisions}
Here we show the proof for lower bound of finding $4$-collisions. The proof for arbitrary constant has the same structure but different parameters which is shown in the next section. We prove the case of 4-collisions here to give the idea before generalizing.

Let $f_{i, j}$ be the amplitude of the database containing at least $j$ $3$-collisions after making $i$ quantum queries, 
$g_{i, j, k}$ be the amplitude of the database containing exactly $j$ $3$-collisions and at least $k$ $4$-collisions after $i$ quantum queries, 
$g_{i, k}$ be the amplitude of containing at least $k$ $4$-collisions after $i$ quantum queries. 

As we have seen in the last proof, we have 
\begin{equation*}
	g_{i, k} \leq g_{i-1, k} + \sqrt{\sum_{l \geq 0} \frac{l}{N} g^2_{i - 1, l, k - 1}}
\end{equation*}

Define $h_4(i) = \max\{  (2 e)^{3/2} \cdot \frac{i^{7/4}}{N^{3/4}} , 10 N^{1/16} \}$. Again, we can bound $g_{i, k}$ by dividing the summation into two parts:
\begin{eqnarray*}
	g_{i, k} &\leq& g_{i-1, k} + \sqrt{\sum_{l \leq h_4(i-1)} \frac{l}{N} g^2_{i - 1, l, k - 1}} + \sqrt{\sum_{l > h_4(i-1)} 1 \cdot g^2_{i - 1, l, k - 1}} \\
				&\leq& g_{i-1, k} + \sqrt{\frac{h_4(i-1)}{N}} g_{i-1, k-1} + f_{i - 1, h_4(i-1)} \\
				&\vdots& \\
				&\leq&  \sum_{0 \leq l < i} \sqrt{\frac{h_4(l)}{N}} g_{l, k-1} + \sum_{0 \leq l < i} f_{l, h_4(l)}
\end{eqnarray*}

The second term can be bounded as the following (and we can safely assume $i < N^{1/2}$)
\begin{eqnarray*}
	\sum_{0 \leq l < i} f_{l, h_4(l)}  &\leq&   \sum_{0 \leq l < i} \left(   \frac{A_l^{h_4(l)}}{h_4(l)!} + 2^{-N^{1/8}}  \right) \\
												&\leq& \sum_{0 \leq l < i} \left(  \frac{e A_l}{h_4(l)} \right)^{h_4(l)} + N^{1/2} \cdot 2^{-N^{1/8}} \\
												&\leq&  \sum_{0 \leq l < i} \left( \frac{1}{2} + o(1) \right) ^{10   {N^{1/16}}} + N^{1/2} \cdot 2^{-N^{1/8}} \\
												&\leq& 2^{-9.5 N^{1/16}}
\end{eqnarray*}

Let $B_i = \sum_{0 \leq l < i} \sqrt{\frac{h_4(l)}{N}}$. 
And similarly, for all $i \leq N^{1/2}$,  
\begin{equation*}
	B_i \leq (2 e)^{3/4} \frac{i^{15/8}}{N^{7/8}} + O(N^{- \frac{1}{16} \cdot \frac{1}{14}})
\end{equation*}
The proof follows from the last proof for $k = 3$. A generalized version (for any constant) can be found in the next section. 
And $B_i$ is bounded by $B_{\sqrt{N}}$ which is at most $2 e \cdot N^{1/16}$.

Finally we have the following closed form:
\begin{equation*}
	g_{i, k} \leq \frac{B_i^k}{k!} + \sum_{l=0}^{k-1} \frac{B_i^l}{l!} \cdot 2^{-9.5 N^{1/16}} < \frac{B_i^k}{k!} + e^{B_{\sqrt{N}}} \cdot 2^{-9.5 N^{1/16}} \leq \frac{B_i^k}{k!} + 2^{-N^{1/16}}
\end{equation*}

So we can conclude the following theorem:
\begin{theorem}
	For any quantum algorithm, given a random function $f : X \to Y$ where $N = |Y|$, it needs to make $\Omega(j^{8/15} N^{7/15})$
quantum queries to find $j$ $4$-collisions for any $j \leq N^{1/16}$ with constant probability.
\end{theorem}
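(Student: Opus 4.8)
The plan is to obtain this theorem directly from the two facts already established in this subsection: the closed form $g_{i,k} \le \frac{B_i^k}{k!} + 2^{-N^{1/16}}$ (for $i \le N^{1/2}$ and $1 \le k \le N^{1/16}$) and the estimate $B_i \le (2e)^{3/4}\frac{i^{15/8}}{N^{7/8}} + O(N^{-1/224})$, together with the crude bound $B_i < 2e\,N^{1/16}$. The argument mirrors the one used for the $3$-collision lower bound. First, recall that $g_{i,j} = \big|P_{4,j}|\phi_i\rangle\big|$ is the amplitude on compressed-phase databases carrying at least $j$ distinct $4$-collisions. Applying the lemma of~\cite{EPRINT:Zhandry18} that relates the true success probability $p$ to the measured-database probability $p' = g_{i,j}^2$ --- taken with the relation $R$ consisting of all tuples that encode $j$ distinct $4$-collisions, so that the relevant parameter there is $4j \le 4N^{1/16}$ --- any $i$-query algorithm produces $j$ distinct $4$-collisions with probability at most $\big(g_{i,j} + \sqrt{4j/N}\big)^2$. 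Since $j \le N^{1/16}$, the term $\sqrt{4j/N}$ is $o(1)$, so it suffices to exhibit an $i^* = \Omega(j^{8/15}N^{7/15})$ for which $g_{i,j} = o(1)$ whenever $i \le i^*$.

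I would then split on the growth of $j$, exactly as for $k=3$. If $j$ is a constant, take any $i^* = o(N^{7/15})$; then $(i^*)^{15/8} = o(N^{7/8})$, hence $B_{i^*} = o(1)$ and $g_{i^*,j} \le \frac{B_{i^*}^j}{j!} + 2^{-N^{1/16}} = o(1)$. If instead $j$ grows with $N$, let $i^*$ be the largest integer with $B_{i^*} < \frac{1}{2e}j$; solving $(2e)^{3/4}(i^*)^{15/8}/N^{7/8} \asymp j/(2e)$ gives $i^* = O(j^{8/15}N^{7/15})$, and one checks $i^* \le N^{1/2}$ precisely because $j \le N^{1/16}$ (this is where that hypothesis enters). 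Using $j! > (j/e)^j$,
\[
   g_{i^*,j} \le \frac{B_{i^*}^j}{j!} + 2^{-N^{1/16}} < \Big(\frac{e B_{i^*}}{j}\Big)^j + 2^{-N^{1/16}} < 2^{-j} + 2^{-N^{1/16}} = o(1).
\]
In both cases an $i^*$-query algorithm succeeds with probability $o(1)$, so constant success probability forces $i = \Omega(j^{8/15}N^{7/15})$, which is the claim.

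The part of this development that carries the real weight is not the translation just sketched but the derivation --- done above --- of the recursion $g_{i,k} \le g_{i-1,k} + \sqrt{h_4(i-1)/N}\,g_{i-1,k-1} + f_{i-1,h_4(i-1)}$ and, especially, the tuning of the threshold $h_4(i) = \max\{(2e)^{3/2}i^{7/4}/N^{3/4},\,10N^{1/16}\}$. The floor $10N^{1/16}$ must be large enough that the tail amplitude $f_{i,h_4(i)} \approx (eA_i/h_4(i))^{h_4(i)}$ is smaller than any inverse polynomial --- this is exactly what produces the $2^{-N^{1/16}}$ additive error and forces the restriction $j \le N^{1/16}$ --- yet $\sqrt{h_4(i)/N}$ must remain small enough that $B_i = \sum_{l<i}\sqrt{h_4(l)/N}$ is still only $O(N^{1/16})$ at $i = \sqrt N$. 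Once those competing requirements are balanced and the closed form for $g_{i,k}$ is extracted by the same telescope-then-multinomial manipulation as in the $k=3$ case, passing to the $\Omega(j^{8/15}N^{7/15})$ bound is immediate; and the whole scheme generalizes, with $h_k$, $A_i$, $B_i$ replaced by their $k$-dependent analogues, to arbitrary constant $k$.
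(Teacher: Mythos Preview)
Your proposal is correct and follows essentially the same route as the paper: the paper does not spell out the proof of this theorem at all, simply stating it after deriving the closed form $g_{i,k}\le B_i^k/k!+2^{-N^{1/16}}$ and the estimate for $B_i$, with the understanding that one repeats the two-case argument (constant $j$ versus growing $j$) from the $3$-collision theorem verbatim. You have filled in exactly those details, and your additional remarks on why the threshold $h_4$ is tuned the way it is are accurate commentary rather than a departure in method.
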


\subsection{Lower bound for finding a constant-collision}
In this section, we are going to show that the theorem can be generalized to any constant-collision. 
Let $f_{i, j}$ be the amplitude of the database containing at least $j$ distinct $s$-collisions after $i$ quantum queries, $g_{i, j, k}$ be the amplitude of the database containing exactly $j$ distinct $s$-collisions and at least $k$ distinct $(s+1)$-collisions after $i$ quantum queries. Also let $g_{i, k}$ be the amplitude of the database with at least $k$ distinct $(s+1)$-collisions after $i$ quantum queries. 

We assume $f_{i, j}$ is only defined for $i \leq \sqrt{N}, 1 \leq j \leq N^{1/2^s}$ and $g_{i, k}$ is only defined for $i \leq \sqrt{N}, 1 \leq k \leq N^{1/2^{s+1}}$. It holds for the base cases $s = 4$. 

Define $h_s(i)$ (for any $s \geq 3$) as the following: 
\begin{equation*}
    h_s(i) = \max\left\{  (2e)^{\frac{2^{s-2}-1}{2^{s-3}}} \frac{i^{(2^{s-1}-1)/2^{s-2}}}{N^{(2^{s-2}-1)/2^{s-2}}},  10 \cdot N^{1/2^s}  \right\}
\end{equation*}
It holds for $s = 3, 4$ where $h_3(i) = \max\{(2 e) \cdot i^{3/2} / N^{1/2}, 10 N^{1/8}\}$ and 
$h_4(i) = \max\{ (2 e)^{3/2} \cdot i^{7/4} / N^{3/4}, 10 N^{1/16}\}$. 

Define $A_{i, s} = \sum_{l = 0}^{i-1}  \sqrt{\frac{h_s(l)}{N}}$.
And $A_{i, s} \leq (2e)^{\frac{2^{s-2}-1}{2^{s-2}}} \frac{i^{(2^{s}-1)/2^{s-1}}}{N^{(2^{s-1}-1)/2^{s-1}}} + O(N^{-1/(2^s (2^s -2))})$.
It is easy to see $A_i$ and $B_i$ in the last proof are $A_{i, 3}$ and $A_{i, 4}$. 

\begin{lemma}
     $A_{i, s} \leq (2e)^{\frac{2^{s-2}-1}{2^{s-2}}} \frac{i^{(2^{s}-1)/2^{s-1}}}{N^{(2^{s-1}-1)/2^{s-1}}} + O(N^{-1/(2^s (2^s -2))})$ holds for all constant $s \geq 3$.
\end{lemma}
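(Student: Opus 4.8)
The plan is to prove this directly, carrying the parameters of the already-verified $s=3$ bound on $A_i=A_{i,3}$ and $s=4$ bound on $B_i=A_{i,4}$ symbolically; no induction on $s$ is actually needed here, since $A_{i,s}$ is defined straight from $h_s$, which is the maximum of a single monomial in $l$ and the constant floor $10N^{1/2^s}$. First I would locate the crossover point $l_0$, the largest $l$ at which the monomial part of $h_s(l)$ is still at most $10N^{1/2^s}$; solving $(2e)^{(2^{s-2}-1)/2^{s-3}}\, l^{(2^{s-1}-1)/2^{s-2}}/N^{(2^{s-2}-1)/2^{s-2}} = 10N^{1/2^s}$ gives $l_0 = \Theta(N^{e_0})$ with $e_0 = (1-3/2^s)\cdot \frac{2^{s-2}}{2^{s-1}-1}$, and a one-line check shows $e_0<1/2$, so $l_0 = o(\sqrt N)$ and the floor regime contributes at most $l_0+1$ terms regardless of which $i\le\sqrt N$ we take.

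Then I would split $A_{i,s} = \sum_{l=0}^{i-1}\sqrt{h_s(l)/N}$ into the floor part ($l\le l_0$) plus an over-estimate of the monomial part, namely the monomial-in-$l$ expression summed over all $l<i$. For the monomial part, pulling the constant out front gives $(2e)^{(2^{s-2}-1)/2^{s-2}}\,N^{-(2^{s-1}-1)/2^{s-1}}\sum_{l<i} l^{\alpha}$ with $\alpha = \frac{2^{s-1}-1}{2^{s-1}}$, where the exponent bookkeeping uses $\tfrac12\cdot\frac{2^{s-1}-1}{2^{s-2}} = \alpha$ and $\tfrac12\big(1+\frac{2^{s-2}-1}{2^{s-2}}\big) = \frac{2^{s-1}-1}{2^{s-1}}$. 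Comparing the sum to an integral, $\sum_{l<i} l^\alpha \le i^{\alpha+1}/(\alpha+1)\le i^{\alpha+1}$ with $\alpha+1 = \frac{2^s-1}{2^{s-1}}$, which is precisely the claimed main term $(2e)^{(2^{s-2}-1)/2^{s-2}}\, i^{(2^s-1)/2^{s-1}}/N^{(2^{s-1}-1)/2^{s-1}}$.

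For the floor part, each of the at most $l_0+1$ terms equals $\sqrt{10N^{1/2^s}/N} = \Theta(N^{(1-2^s)/2^{s+1}})$, so this part is $O(N^{e_0 + (1-2^s)/2^{s+1}})$. The one step that needs care is checking that this exponent collapses to $-\tfrac{1}{2^s(2^s-2)}$: setting $b = 2^{s-1}-1$ one has $e_0 = \frac{2b-1}{4b}$ and $\frac{1-2^s}{2^{s+1}} = -\frac{2b+1}{4(b+1)}$, whose sum is $\frac{(2b-1)(b+1)-b(2b+1)}{4b(b+1)} = \frac{-1}{4b(b+1)} = -\frac{1}{2^s(2^s-2)}$. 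Adding the two contributions proves the lemma. I expect the only obstacle to be exactly this exponent bookkeeping, together with tracking the constant $(2e)^{(2^{s-2}-1)/2^{s-2}}$ through the square roots; everything else is a monotone estimate identical in form to the $s=3$ and $s=4$ computations already carried out above.
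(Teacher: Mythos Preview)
Your proposal is correct and follows essentially the same route as the paper's proof: split $A_{i,s}$ into the floor regime $l\le l_0$ and (an overestimate by) the monomial summed over all $l<i$, bound the latter by an integral to get the main term, and bound the former by counting terms and computing the resulting exponent. Your substitution $b=2^{s-1}-1$ to verify that the floor contribution has exponent $-1/(2^s(2^s-2))$ makes the bookkeeping cleaner than in the paper, but the argument is otherwise identical.
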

The lemma is consistent with the cases where $s = 3, 4$. 
\begin{proof}
    \begin{eqnarray*}
        A_{i, s} &=& \sum_{l = 0}^{i-1}  \sqrt{\frac{h_s(l)}{N}} \\
                 &=& \sum_{l : h_s(l) = 10 N^{1/2^s}}  \sqrt{\frac{10 N^{1/2^s}}{N}} +  \sum_{l: h_s(l) > 10 N^{1/2^s}}   \sqrt{\frac{h_s(l)}{N}} \\
                 &=& \sum_{l : h_s(l) = 10 N^{1/2^s}}  \sqrt{\frac{10 N^{1/2^s}}{N}} +  \sum_{l=0}^{i-1}  (2e)^{\frac{2^{s-2}-1}{2^{s-2}}} \frac{l^{(2^{s-1}-1)/2^{s-1}}}{N^{(2^{s-2}-1)/2^{s-1}}} \cdot N^{-1/2}
    \end{eqnarray*}
    where the second summation is at most $(2e)^{\frac{2^{s-2}-1}{2^{s-2}}} \frac{i^{(2^{s}-1)/2^{s-1}}}{N^{(2^{s-1}-1)/2^{s-1}}}$ and the first summation is at most 
    \begin{eqnarray*}
        \sum_{l : h_s(l) = 10 N^{1/2^s}}  \sqrt{\frac{10 N^{1/2^s}}{N}} &=&  \sqrt{\frac{10 N^{1/2^s}}{N}} \cdot O\left( N^{\left( \frac{1}{2^s} + 1 - \frac{1}{2^{s-2}} \right)\cdot \frac{2^{s-2}}{2^{s-1}-1} } \right)   \\
            &\leq& O\left(   N^{-\frac{1}{2} + \frac{1}{2^{s+1}}} \cdot N^{\frac{2^s - 3}{4(2^{s-1}-1)}} \right) \\
            &\leq& O\left(   N^{-\frac{1}{2^s (2^s-2)}} \right) 
    \end{eqnarray*}
    which completes the proof. 
\end{proof}

Finally, we assume $f_{i, j} \leq \frac{A_{i, s}^j}{j!} + O(2^{- N^{1/2^s}})$ which holds for both $s = 3, 4$. We are going to show it holds for $(s+1)$, in other words,     $g_{i, k} \leq\frac{A_{i, s+1}^k}{k!} +  O(2^{- N^{1/2^{s+1}}})$. And by induction, it holds for all constant $s$. 

As we have seen in the last proof, we have the following inequality:
\begin{eqnarray*}
    g_{i, k} &\leq&  g_{i-1, k} + \sqrt{ \sum_{l \geq 0} \frac{l}{N} g^2_{i-1, l, k-1} } \\
            &\leq& g_{i-1, k} + \sqrt{\frac{h_{s+1}(i-1)}{N}} \cdot g_{i-1, k-1} + f_{i-1, h_{s+1}(i-1)}
\end{eqnarray*}
where as $i \leq N^{1/2}$, for sufficient large $N$, the last term $f_{i-1, h_{s+1}(i-1)}$ can be bounded as:
\begin{eqnarray*}
    f_{i-1, h_{s+1}(i-1)} &\leq& \frac{A_{i-1, s}^{h_{s+1}(i-1)}}{h_{s+1}(i-1)!} + O(2^{-N^{1/2^s}}) \\
                &\leq&  \left( e \cdot \frac{ (2e)^{\frac{2^{s-2}-1}{2^{s-2}}} \frac{(i-1)^{(2^{s}-1)/2^{s-1}}}{N^{(2^{s-1}-1)/2^{s-1}}} + O(N^{-1/(2^s (2^s -2))})     }{ \max\left\{  (2e)^{\frac{2^{s-1}-1}{2^{s-2}}} \frac{(i-1)^{(2^{s}-1)/2^{s-1}}}{N^{(2^{s-1}-1)/2^{s-1}}},  10 \cdot N^{1/2^{s+1}}  \right\}}  \right)^{10 N^{1/2^{s+1}}} + O(2^{-N^{1/2^s}}) \\
                &\leq& \left( \frac{1}{2} + o(1) \right)^{10 N^{1/2^{s+1}}} + O(2^{-N^{1/2^s}}) \\
                &<& 2^{- 9.8 N^{1/2^{s+1}}}
\end{eqnarray*}

By expanding the inequality, we get 
\begin{eqnarray*}
    g_{i, k} &\leq&  \sum_{l=0}^{i-1} \sqrt{\frac{h_{s+1}(l)}{N}} g_{l, k-1} + N^{1/2} \cdot 2^{- 9.8 N^{1/2^{s+1}}} \\
    &\leq&  \sum_{l=0}^{i-1} \sqrt{\frac{h_{s+1}(l)}{N}} g_{l, k-1} + 2^{- 9.5 N^{1/2^{s+1}}} \\
    &\leq&  \frac{A_{i, s+1}^k}{k!} + \sum_{l =0}^{k-1} \frac{A_{i, s+1}^l}{l!} \cdot  2^{- 9.5 N^{1/2^{s+1}}} \\
    &\leq&  \frac{A_{i, s+1}^k}{k!} +   e^{A_{i, s+1}} \cdot  2^{- 9.5 N^{1/2^{s+1}}} 
\end{eqnarray*}
Because $i \leq \sqrt{N}$, $A_{i, s+1} < 2 e N^{1/2^{s+1}}$. Finally, we have 
\begin{equation*}
    g_{i, k} \leq\frac{A_{i, s+1}^k}{k!} +  2^{- N^{1/2^{s+1}}} 
\end{equation*}
which completes the induction. So we have the following theorem:
\begin{corollary}
    For any constant $s \geq 2$, let $f_{i,j}$ be the amplitude of the database containing at least $j$ $s$-collisions after $i$ quantum queries. For all $1 \leq j \leq N^{1/2^s}$, we have 
    \begin{equation*}
        f_{i, j} \leq \frac{A_{i, s}^j}{j!} + O(2^{- N^{1/2^s}})
    \end{equation*}
    where 
    \begin{equation*}
        A_{i, s} \leq (2e)^{\frac{2^{s-2}-1}{2^{s-2}}} \frac{i^{(2^{s}-1)/2^{s-1}}}{N^{(2^{s-1}-1)/2^{s-1}}} + O(N^{-1/(2^s (2^s -2))})
    \end{equation*} 
\end{corollary}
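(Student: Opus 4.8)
The plan is to prove the corollary by induction on the collision order $s$, taking the cases $s=2,3,4$ already established in the preceding subsections as the base. So fix a constant $s\ge 4$, assume the inductive hypothesis $f_{i,j}\le A_{i,s}^{\,j}/j! + O(2^{-N^{1/2^s}})$ for all $i\le\sqrt N$ and $1\le j\le N^{1/2^s}$, and let $g_{i,k}$ (resp.\ $g_{i,j,k}$) be the amplitude that the compressed-oracle database contains at least $k$ (resp.\ exactly $j$ distinct $s$-collisions and at least $k$) distinct $(s+1)$-collisions after $i$ queries. Since the $(s+1)$-collision amplitude is exactly the level-$(s+1)$ incarnation of $f$, the goal is to show $g_{i,k}\le A_{i,s+1}^{\,k}/k! + O(2^{-N^{1/2^{s+1}}})$. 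First I would re-derive, verbatim from the $2$- and $3$-collision arguments, the one-query inequality: writing the pre-query state through the projectors $P_{s+1,k}$ and $P_{s+1,l,k-1}$ and using the decomposition $O=P O P + P O(I-P)$ from Lemma~\ref{querylemma}, the fact that introducing a fresh $(s+1)$-collision requires the new point to hit one of the $l$ existing $s$-collisions (so only $l$ of $N$ values of $u'$ work) gives $g_{i,k}\le g_{i-1,k} + \sqrt{\sum_{l\ge 0}\frac{l}{N}\,g_{i-1,l,k-1}^2}$.

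Next I would split the inner sum at the threshold $h_{s+1}(i-1)$: for $l\le h_{s+1}(i-1)$ bound $l/N\le h_{s+1}(i-1)/N$ and use $\sum_l g_{i-1,l,k-1}^2 = g_{i-1,k-1}^2$; for $l>h_{s+1}(i-1)$ bound $l/N\le 1$ and note $\sum_{l>h_{s+1}(i-1)} g_{i-1,l,k-1}^2 \le f_{i-1,h_{s+1}(i-1)}^2$, since that tail is at most the probability of having at least $h_{s+1}(i-1)$ distinct $s$-collisions. This yields the level-$(s+1)$ analogue of Lemma~\ref{recur13}, $g_{i,k}\le g_{i-1,k} + \sqrt{h_{s+1}(i-1)/N}\;g_{i-1,k-1} + f_{i-1,h_{s+1}(i-1)}$. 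I would then control the error term by the inductive hypothesis together with the estimate on $A_{i,s}$ (proved, as for $A_{i,3}$ and $A_{i,4}$, by splitting the sum defining $A_{i,s}$ according to which branch of the max in $h_s$ is active): the definition of $h_{s+1}$ as a max with the floor $10\,N^{1/2^{s+1}}$ is engineered so that $e\,A_{i-1,s}/h_{s+1}(i-1)\le \frac12 + o(1)$, whence $f_{i-1,h_{s+1}(i-1)}\le (e A_{i-1,s}/h_{s+1}(i-1))^{h_{s+1}(i-1)} + O(2^{-N^{1/2^s}}) \le 2^{-9.8\,N^{1/2^{s+1}}}$ for large $N$. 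Finally I would telescope over $i$ (absorbing $N^{1/2}\cdot 2^{-9.8 N^{1/2^{s+1}}}\le 2^{-9.5 N^{1/2^{s+1}}}$) and then unfold the recursion in $k$ exactly as in the $3$-collision proof to obtain $g_{i,k}\le A_{i,s+1}^{\,k}/k! + e^{A_{i,s+1}}\cdot 2^{-9.5 N^{1/2^{s+1}}}$; since $i\le\sqrt N$ forces $A_{i,s+1}< 2e\,N^{1/2^{s+1}}$ by the $A$-estimate, the error collapses to $O(2^{-N^{1/2^{s+1}}})$, closing the induction, and the $A_{i,s}$ bound stated in the corollary is the same two-part summation estimate.

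The main obstacle I anticipate is bookkeeping rather than a new idea: the quantum content is entirely inherited from Zhandry's compressed-oracle lemmas and Lemma~\ref{querylemma}, so the work is checking that the single free parameter $h_{s+1}$ can be chosen to be simultaneously (a) small enough that the $\sqrt{h_{s+1}/N}$ factors telescope to \emph{exactly} $A_{i,s+1}$, producing the leading term $A_{i,s+1}^{\,k}/k!$ with the correct exponents $(2^{s+1}-1)/2^s$ and $(2^s-1)/2^s$, and (b) large enough — via the $10\,N^{1/2^{s+1}}$ floor — that $f_{i-1,h_{s+1}(i-1)}$ is super-polynomially small so the $\sqrt N$ accumulated error terms survive multiplication by $e^{A_{i,s+1}}$ during the unfolding in $k$. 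Making all the fractional exponents line up across the two nested recursions, and verifying that the hypothesis range $1\le j\le N^{1/2^s}$ propagates to $1\le k\le N^{1/2^{s+1}}$, is the delicate part.
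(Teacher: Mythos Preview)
Your proposal is correct and follows essentially the same approach as the paper: the paper's proof is precisely this induction on $s$, deriving the one-query inequality, splitting at the threshold $h_{s+1}(i-1)$ to obtain the analogue of Lemma~\ref{recur13}, using the inductive hypothesis and the design of $h_{s+1}$ to bound $f_{i-1,h_{s+1}(i-1)}\le 2^{-9.8 N^{1/2^{s+1}}}$, then telescoping in $i$ and unfolding in $k$ to reach $A_{i,s+1}^{\,k}/k! + e^{A_{i,s+1}}\cdot 2^{-9.5 N^{1/2^{s+1}}}$, with $A_{i,s+1}<2e\,N^{1/2^{s+1}}$ collapsing the error. The separate $A_{i,s}$ estimate is likewise obtained by the two-part summation you describe.
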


\begin{theorem}
	For any quantum algorithm, given a random function $f : X \to Y$ where $N = |Y|$, it needs to make $\Omega(j^{2^{s-1}/(2^s-1)} N^{(2^{s-1}-1)/(2^s-1)})$
quantum queries to find $j$ $s$-collisions for any $j \leq N^{1/2^s}$.

    Moreover, for any quantum algorithm, given a random function $f : X \to Y$ where $N = |Y|$, it needs to make $\Omega(N^{(2^{s-1}-1)/(2^s-1)})$
    quantum queries to find one $s$-collision.
\end{theorem}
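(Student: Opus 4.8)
The plan is to convert the amplitude bound of the preceding corollary into a query lower bound, mirroring the arguments already carried out above for $s=3$ and $s=4$. First I would recall the two-step reduction underlying all of these bounds. By Zhandry's Lemma~5, if a $q$-query algorithm outputs $j$ distinct $s$-collisions for the random oracle $f$ with probability $p$, then running the same algorithm against the compressed phase oracle and measuring the database $D$ at the end produces $j$ distinct $s$-collisions in $D$ with probability $p'$ satisfying $\sqrt{p}\le\sqrt{p'}+\sqrt{O(sj)/N}$, and this additive term is negligible because $sj\le s\,N^{1/2^s}=o(N)$ for constant $s$. Measuring $D$ after $q$ queries is exactly the projective measurement $P_{s,j}$ applied to $|\phi_q\rangle$, so $\sqrt{p'}=|P_{s,j}|\phi_q\rangle|=f_{q,s}$ in the notation of the corollary. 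Hence it suffices to show $f_{i,j}=o(1)$ whenever $i$ is below the claimed bound. Throughout we may assume $i\le\sqrt{N}$, since $N^{(2^{s-1}-1)/(2^s-1)}<\sqrt{N}$ and, in any case, any constant-size collision is found with $O(\sqrt{N})$ queries by iterating Grover search, so the corollary's range of validity is not a restriction.

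Next, using $1/j!<(e/j)^j$ together with the corollary gives
\begin{equation*}
  f_{i,j}\le\left(\frac{e\,A_{i,s}}{j}\right)^{j}+O\!\left(2^{-N^{1/2^s}}\right),\qquad A_{i,s}\le (2e)^{\frac{2^{s-2}-1}{2^{s-2}}}\,\frac{i^{(2^s-1)/2^{s-1}}}{N^{(2^{s-1}-1)/2^{s-1}}}+O\!\left(N^{-1/(2^s(2^s-2))}\right).
\end{equation*}
I would then split into two cases exactly as in the $s=3$ proof. If $j$ is constant, choose $i^{\ast}=o(N^{(2^{s-1}-1)/(2^s-1)})$; then $A_{i^{\ast},s}=o(1)$, so $f_{i^{\ast},j}\le A_{i^{\ast},s}^{j}/j!+O(2^{-N^{1/2^s}})=o(1)$. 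If $j$ is not constant (but $\le N^{1/2^s}$), let $i^{\ast}$ be the largest integer with $A_{i^{\ast},s}<j/(2e)$. Solving the leading term $(2e)^{(2^{s-2}-1)/2^{s-2}}\,i^{(2^s-1)/2^{s-1}}/N^{(2^{s-1}-1)/2^{s-1}}=\Theta(j)$ for $i$ — and checking that the lower-order $O(N^{-1/(2^s(2^s-2))})$ correction inside $A_{i,s}$ only perturbs $i^{\ast}$ by a lower-order factor — yields $i^{\ast}=\Theta\!\left(j^{2^{s-1}/(2^s-1)}N^{(2^{s-1}-1)/(2^s-1)}\right)$, whence $f_{i^{\ast},j}\le 2^{-j}+O(2^{-N^{1/2^s}})=o(1)$. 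In both cases the probability $p$ of producing $j$ distinct $s$-collisions with $i^{\ast}$ queries is $o(1)$, giving the claimed $\Omega\!\left(j^{2^{s-1}/(2^s-1)}N^{(2^{s-1}-1)/(2^s-1)}\right)$ bound. The ``moreover'' statement is the special case $j=1$ (legitimate since $1\le N^{1/2^s}$), for which the first case applies directly and yields $\Omega(N^{(2^{s-1}-1)/(2^s-1)})$ queries to find a single $s$-collision with constant probability.

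The substance of the argument is already contained in the corollary; what is left here is essentially the bookkeeping of inverting $A_{i,s}$, and this is the step I would treat as the main (though modest) obstacle. Specifically, one must verify uniformly over the whole range $1\le j\le N^{1/2^s}$ that (i) the additive error $O(2^{-N^{1/2^s}})$ stays negligible against the $2^{-j}$ bound, which holds since $j\le N^{1/2^s}$, and (ii) the lower-order term in $A_{i,s}$ does not swamp the leading term at the threshold $i^{\ast}$, i.e.\ that $j/(2e)$ dominates $O(N^{-1/(2^s(2^s-2))})$, which holds because $j\ge 1$ while the correction is a negative power of $N$. Granting these, the map $i\mapsto A_{i,s}$ is monotone and the stated exponents follow by a direct solve.
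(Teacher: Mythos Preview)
Your proposal is correct and follows essentially the same approach as the paper: the paper proves this theorem explicitly only for $s=3$ (the two-case split on whether $j$ is constant, defining $i^\ast$ via $A_{i^\ast}<j/(2e)$ and reading off the exponent), and leaves the general-$s$ version implicit as an immediate consequence of the corollary, which is exactly the inversion you carry out. One small typo: in your reduction paragraph you write $f_{q,s}$ where you mean $f_{q,j}$.
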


\bibliographystyle{alpha}

\bibliography{bib,abbrev3,crypto}

\end{document}